\newtheorem{theorem}{Theorem}[section]
\newtheorem{lemma}[theorem]{Lemma}
\newtheorem{fact}[theorem]{Fact}
\theoremstyle{definition}
\newtheorem{remark}[theorem]{Remark}
\newtheorem{definition}[theorem]{Definition}
\newcommand{\N}{\mathcal{N}}
\newcommand{\mI}{\mathcal{I}}
\newcommand{\eps}{\epsilon}
\newcommand{\wt}{\widetilde}
\newcommand{\TV}{\mathsf{TV}}
\newcommand{\EFG}{\mathsf{EFG}}
\newcommand{\mS}{S^{\mathsf{EFG}}}
\renewcommand{\E}{\mathbb{E}}
\DeclareMathOperator*{\argmax}{arg\,max}
\definecolor{DarkBrown}{rgb}{0.4,0.13,0.13}
\newcommand*{\RN}[1]{\expandafter\@slowromancap\romannumeral #1@}
\title{The complexity of approximate (coarse) correlated equilibrium for incomplete information games}
\author{  
Binghui Peng
\\ Columbia University \\  \texttt{bp2601@columbia.edu} 
\and Aviad Rubinstein
\\ Stanford University \\ \texttt{aviad@stanford.edu}
}
\date{\today}
\begin{document}
\maketitle
\begin{abstract}

We study the iteration complexity of decentralized learning of approximate correlated equilibria in incomplete information games. 

On the negative side, we prove that in {\em extensive-form games}, assuming $\PPAD \not\subset \TIME(n^{\polylog(n)})$, any polynomial-time learning algorithms must take at least $2^{\log_2^{1-o(1)}(|\mI|)}$ iterations to converge to the set of $\eps$-approximate correlated equilibrium, where $|\mI|$ is the number of nodes in the game and $\eps > 0$ is an absolute constant. This nearly matches, up to the $o(1)$ term, the algorithms of \cite{peng2023fast, dagan2023external} for learning $\eps$-approximate correlated equilibrium, and resolves an open question of Anagnostides, Kalavasis, Sandholm, and Zampetakis \cite{anagnostides2024complexity}. Our lower bound holds even for the easier solution concept of $\eps$-approximate {\em coarse} correlated equilibrium

On the positive side, we give uncoupled dynamics that reach $\eps$-approximate correlated equilibria of a {\em Bayesian game} in polylogarithmic iterations, without any dependence of the number of types. This demonstrates a separation between Bayesian games and extensive-form games.

\end{abstract}

\setcounter{page}{0}
\thispagestyle{empty}

\newpage
\section{Introduction}
\label{sec:intro}
Our work considers fundamental questions on modeling and designing agents who learn to interact in settings with incomplete information. From an online learning perspective, we want agents that achieve low {\em external regret}%
\footnote{Definitions of external and swap regret, Bayesian and extensive-form games, and various notions of equilibria are deferred to Section~\ref{sec:pre}.}, ideally also satisfying the stronger notion of low {\em swap regret}. From a game theory perspective, we consider agents that learn to play in games of incomplete information; we ask whether they can efficiently converge to the set of approximate {\em coarse coarse correlated (CCE)} equilibria, ideally also satisfying the stronger notion of approximate {\em correlated equilibria (CE)}. 
The connection between online learning and equilibrium computation is well-known (e.g. see \cite{blum2007external}): If each agent runs an online learning algorithm with low external regret, then the empirical distribution converges to the set of approximate CCE; if it further satisfies low swap regret, then the distribution converges to the set of approximate CE .



Correlated equilibria (specifically, normal-form correlated equilibria, NFCE%
\footnote{Different variants of CCE and CE have been considered for games of incomplete information. In what is arguably the most natural and desired variant, normal-form CE (NFCE), the correlation can be implemented by a correlating device sending each player a signal ex-ante, i.e.~one signal that is independent of their type (in Bayesian game) or node in the game tree (in an extensive-form game). Motivated primarily by the conjecture that NFCE are intractable, several other variants have been considered~\cite{von2008extensive,fujii2023bayes}.}) have long been conjectured intractable~\cite{von2008extensive}, until recently \cite{dagan2023external,peng2023fast} gave algorithms that, for both Bayesian games and more general extensive-form games, compute $\eps$-NFCE in $(|\mI| \log(n))^{O(1/\eps)}$ iterations of uncoupled dynamics, where $|\mI|$ is the size of the game (number of types or game nodes) and $n$ is the number of actions. Put in the language of online learning, agents can achieve $\eps$-swap regret after $(|\mI| \log(n))^{O(1/\eps)}$ iterations. While this is polynomial in the full description of the game, the dependence on $|\mI|$ (which is often combinatorial in natural game parameters) is undesirable.

A natural question is whether the algorithms of~\cite{dagan2023external,peng2023fast} can be improved. Our results show that the answer depends on whether we have a Bayesian game or an extensive-form game:

\paragraph{A lower bound on (poly-time) external regret in extensive-form games}
Recent work of~\cite{anagnostides2024complexity} showed that even with only three players, and even if one settles for the weaker desiderata of low {\em external} regret or approximate {\em coarse} correlated equilibrium, they cannot be achieved in extensive-form games by polynomial-time agents running for fewer than $2^{\log_2^{1/2-o(1)}(|\mI|)}$ iterations, assuming the ``ETH for \PPAD''~\cite{babichenko2016can}.
In this work we give an improved reduction that:
\begin{itemize}
    \item Raises the lower bound on the number of iterations to $2^{\log_2^{1-o(1)}(|\mI|)}$; notice that this is almost tight as $2^{\log_2^{1+o(1)}(|\mI|)}$ is already sufficient by~\cite{dagan2023external,peng2023fast}. 
    \item Weakens the complexity assumption from $\PPAD \not\subset \TIME(2^{\tilde{\Omega}(n)})$ to $\PPAD \not\subset \TIME(n^{\polylog(n)})$; this lower bound is almost tight since if $\PPAD = \FP$ players can simply compute and play a Nash equilibrium. 
\end{itemize}

\begin{restatable}[]{theorem}{ExtensiveLB}
\label{thm:lb-extensive-main}
Let $\eps > 0$ be an absolute constant. Assuming $\PPAD \not\subset \TIME(n^{\polylog(n)})$, there is no polynomial time learning dynamics that guarantee to converge to the set of $\eps$-CCE of a three player extensive-form game in $2^{\log_2^{1-o(1)}(|\mI|)}$ iterations.
\end{restatable}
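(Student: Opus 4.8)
The plan is to give a reduction, running in time $n^{\polylog(n)}$ and producing an instance of quasi-polynomial size, from a $\PPAD$-complete problem --- say, finding an $\eps_0$-approximate Nash equilibrium of a succinctly represented game, equivalently solving an $\eps_0$-generalized circuit for an absolute constant $\eps_0$ --- to the problem of finding a \emph{sparse} $\eps$-CCE of a three-player extensive-form game $\Gamma=\Gamma(G)$, where the rank of a CCE $\mu$ is the least $R$ for which $\mu$ is a mixture of $R$ products of behavioral strategies. Two facts drive the template. First, a polynomial-time uncoupled dynamics that converges to the set of $\eps$-CCE within $T$ iterations produces, as its empirical distribution of play, an $\eps$-CCE of rank at most $T$ that can be written down in $T\cdot\mathrm{poly}(|\mI|)$ time. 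Second, $\Gamma$ will be designed so that from \emph{every} $\eps$-CCE of rank below a threshold $R=R(|\mI|)$ one can read off, in $\mathrm{poly}(|\mI|,R)$ time, an $\eps_0$-approximate Nash equilibrium of $G$. Arranging $|\mI|=2^{\polylog(n)}$ and $R=2^{\log_2^{1-o(1)}(|\mI|)}$, a dynamics converging within $T\le R$ iterations would solve the $\PPAD$ instance in time $n^{\polylog(n)}+T\cdot\mathrm{poly}(|\mI|)=n^{\polylog(n)}$ (using $R\cdot\mathrm{poly}(|\mI|)=\mathrm{poly}(|\mI|)=n^{\polylog(n)}$ since $R=|\mI|^{o(1)}$), contradicting $\PPAD\not\subset\TIME(n^{\polylog(n)})$. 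This does not clash with the $\mathrm{poly}(|\mI|)$-iteration algorithms of \cite{dagan2023external,peng2023fast}: the $\eps$-CCE they output has rank $\mathrm{poly}(|\mI|)$, far above $R$, so it is permitted to be uninformative. Relative to \cite{anagnostides2024complexity}, the weaker assumption $\PPAD\not\subset\TIME(n^{\polylog(n)})$ (in place of the ETH for $\PPAD$) suffices precisely because the blow-up here is quasi-polynomial rather than exponential.

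For the construction of $\Gamma$ I would use two of its three players, Alice and Bob, to emulate the two sides of the source game $G$, and a third player, the Referee, whose only job is to make the \emph{coarse} relaxation --- and the rank bound --- bite. The constraint $|\mI|=2^{\polylog(n)}$ forbids the tree from branching over the (super-polynomially many) strategies of $G$; instead the tree lets Alice and Bob reveal only \emph{local} information about a strategy profile of $G$ in response to a random ``challenge'' issued by the Referee (a coordinate, a gate, a bit position of the circuit underlying $G$), with terminal payoffs computed from these responses and a large penalty whenever the responses are internally inconsistent or off the declared path. The analysis then argues: in any $\eps$-CCE the Referee is approximately best-responding, which forces the marginal over (Alice's, Bob's) declarations to be $\delta$-close to a product distribution $x\otimes y$ and forces self-consistency; and because a profitable deviation for a player of $G$ against $(x,y)$ lifts to a profitable \emph{fixed}-strategy deviation in $\Gamma$ --- which is exactly what the coarse notion rules out --- the profile $(x,y)$ is an $\eps_0$-Nash equilibrium of $G$. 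The exponent improving from $1/2-o(1)$ to $1-o(1)$ should come from laying the challenge/response verification out ``one-dimensionally'' (e.g.\ as a balanced recursive check) rather than ``two-dimensionally'' (a grid), so that the achievable rank threshold is $2^{\Theta((\log|\mI|)^{1-o(1)})}$ rather than $2^{\Theta(\sqrt{\log|\mI|})}$; this is where the factor-$2$ loss in the exponent of the earlier reduction is removed.

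I expect the technical heart --- and the main obstacle --- to be exactly this packing: pushing $R$ up to $2^{\log_2^{1-o(1)}(|\mI|)}$ while keeping $|\mI|=2^{\polylog(n)}$ and, crucially, arguing that \emph{every} $\eps$-CCE of rank below $R$ exposes the embedded equilibrium --- a rank-$R$ CCE supplies only $R$ ``snapshots'' of the players' behavior, so the challenge/response layout must be engineered so that $R$ snapshots already pin down an $\eps_0$-equilibrium of $G$. Insisting on the \emph{coarse} concept compounds this: we forgo the per-information-set deviations a correlated-equilibrium argument could use and must route everything through the Referee, carefully tracking how $\eps_0,\eps,\delta$ degrade through the composed gadgets. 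A final, more routine point is that the argument must work against \emph{any} polynomial-time uncoupled dynamics, not only no-external-regret ones: one needs that the empirical play of such a dynamics is of the claimed low-rank product-mixture form and is extractable in polynomial time, and one should build a little slack into $\eps$ to absorb that convergence is only to the \emph{set} of $\eps$-CCE (so the object in hand lies near, not inside, the CCE polytope).
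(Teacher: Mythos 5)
Your outer shell matches the paper's framework: poly-time uncoupled dynamics that converge in $T$ iterations yield a rank-$T$ $\eps$-CCE, so it suffices to show hardness of finding low-rank $\eps$-CCE via a quasi-polynomial-size reduction from a \PPAD-hard problem, which is why the assumption can be weakened to $\PPAD \not\subset \TIME(n^{\polylog(n)})$. But the heart of the argument --- \emph{why} every $\eps$-CCE of rank as large as $2^{\log_2^{1-o(1)}(|\mI|)}$ must expose an equilibrium of the source problem --- is asserted rather than supplied. Your claim that ``the Referee is approximately best-responding, which forces the marginal over (Alice's, Bob's) declarations to be $\delta$-close to a product distribution'' does not follow: in a one-shot challenge/response game, the third player's incentive constraints say nothing about the correlation between the other two, no matter how small the rank. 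The paper (following \cite{foster2023hardness,anagnostides2024complexity}) gets decorrelation from a different mechanism that is absent from your sketch: the hard game is \emph{repeated} $H \gg \eps^{-2}\log T$ times, and because the CCE is a mixture of only $T$ product behaviour strategies, a deviating player can run Vovk's online density estimation over the $T$ components, learn which component is active, and best-respond to it (Lemmas~\ref{lem:two-tv} and \ref{lem:k-tv}); the Kibitzer's zero-sum renormalization (Eqs.~\eqref{eq:utility1}--\eqref{eq:utility2}) kills the folk-theorem/threat equilibria that repetition would otherwise introduce, and additionally the Kibitzer is the one who selects a player, a \emph{type}, and a deviating action, which is what lets the reduction extract an (every-type-$\eps$)-BNE.

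Two further substantive discrepancies. First, the source problem: you reduce from succinct games/generalized circuits with a PCP-style local-checking gadget, whereas the paper reduces from the \PPAD-hardness of (every-type-$\eps$)-BNE in two-player, $n$-type, two-action Bayesian games at constant $\eps$ (Theorem~\ref{thm:lb-bayesian}), a polynomial-size problem that plugs directly into the repeated gadget (Nature samples the correlated type of the other player). Second, your explanation of the exponent improvement ($1/2-o(1) \to 1-o(1)$ via a ``one-dimensional vs.\ two-dimensional'' layout) is not the actual reason and would not by itself deliver it: the earlier bound was capped at $1/2$ because normal-form $\eps$-NE is solvable in quasipolynomial time, so under ETH for \PPAD\ the reduction could only afford $H \le \log^{1-o(1)} n$ repetitions while $\log|\mI| = \Theta(H\log n)$; switching to constant-$\eps$ BNE removes that cap, so $H$ can be any polylog, making $\log T \approx \eps^2 H$ an arbitrarily large power of $\log n$ relative to $\log|\mI| = \Theta(H \log n)$, which is exactly how the paper reaches exponent $1-\alpha$ for every constant $\alpha>0$ (proof of Theorem~\ref{thm:lb-extensive}). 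As it stands, your proposal would need to (i) introduce the repetition-plus-prediction mechanism (or a genuinely new substitute) to make the rank bound bite, and (ii) either change the source problem or prove your challenge/response gadget robustly decodes a constant-$\eps$ solution from low-rank CCE marginals; without these the reduction does not go through.
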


This resolves an open question of \cite{anagnostides2024complexity} on the iteration complexity of CCE in extensive-form games, up to the $o(1)$ term in the exponential.

\paragraph{An algorithm for swap regret minimization in Bayesian games}
Our second result shows that for Bayesian games, there is a much better learning dynamics: it runs in time polynomial in the game description, and achieves $\eps$-swap regret (equivalently, $\eps$-CE), using only $(\log(n)/\eps^2)^{O(1/\eps)}$ iterations; in particular there is no dependence whatsoever on the number of types!

\begin{restatable}[]{theorem}{BayesianUB}
\label{thm:bayesian-regret-main}
Let $\eps > 0$ and $n$ be the number of actions.
There is an uncoupled dynamic that converges to the set of $\eps$-CE of a Bayesian game in $(\log(n)/\eps^2)^{O(1/\eps)}$ iterations.
\end{restatable}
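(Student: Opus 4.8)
The plan is to show that in the repeated play of the \emph{normal form} of the Bayesian game --- where player $i$'s pure strategy is a contingency plan $s_i$ mapping each of her types to an action --- every player can guarantee $\eps$-swap regret within $(\log(n)/\eps^2)^{O(1/\eps)}$ iterations. By the standard connection between online learning and equilibria (\cite{blum2007external}), if each player has $\eps$-swap regret then the empirical distribution of play is an $\eps$-CE of the normal form, i.e.\ an $\eps$-NFCE of the Bayesian game. The difficulty is that player $i$'s normal-form action set (the set of plans) has size $n^{|\mI|}$, and a black-box swap-regret minimizer over it gives exactly the $(|\mI|\log n)^{O(1/\eps)}$ bound of \cite{peng2023fast,dagan2023external}; the point is to delete the dependence on $|\mI|$ by exploiting the shallow, product-like information structure of Bayesian games.

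The starting observation is that player $i$'s normal-form loss is \emph{separable over types}: once the opponents' plans are fixed, a realized type profile whose $i$-th coordinate is $\theta$ makes player $i$'s utility depend on her plan only through its value at $\theta$, so (writing $p_\theta$ for the prior probability of type $\theta$) her loss vector over plans equals $\sum_\theta p_\theta$ times a loss supported on a single coordinate. This has two consequences. (i) Running one multiplicative-weights copy \emph{per type} already achieves external regret $O(\sqrt{(\log n)/T})$ with no dependence on the number of types: the best plan optimizes each type coordinate separately, so the per-type regrets are aggregated with the weights $p_\theta$, which sum to $1$. (ii) The swap benchmark $\min_\phi$ over maps $\phi$ on plans likewise decomposes as a $p_\theta$-weighted sum, over types $\theta$, of a \emph{per-type contextual swap regret}: for type $\theta$ the comparator ranges over all maps $g$ from the (enormous) space of plans to $A_i$, with the recommended plan acting as the context. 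These $|\mI|$ per-type problems can be solved in parallel against the \emph{same} run of iterations, and per-iteration time polynomial in $|\mI|$ is allowed; so it suffices to solve one per-type contextual swap-regret problem within $(\log(n)/\eps^2)^{O(1/\eps)}$ iterations.

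The heart of the proof is to do this without paying for the plan space as the context set. The plan is to use that the loss vectors arising in the type-$\theta$ subproblem are reparametrizations of vectors over the $n$ actions --- they live in a \emph{fixed $n$-dimensional} subspace --- so that only a $\mathrm{poly}(n/\eps)$-size set of contexts is ever ``effective'' (maintained adaptively, or obtained by coarsening plans according to the low-dimensional statistic on which the losses and the relevant comparators depend), and then to run the fast swap-regret algorithm of \cite{peng2023fast,dagan2023external} over this $\mathrm{poly}(n/\eps)$-size effective action set, giving $(\log(n/\eps)/\eps^2)^{O(1/\eps)} = (\log(n)/\eps^2)^{O(1/\eps)}$ iterations. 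I expect this to be the main obstacle. The delicate point is that in an NFCE the comparator $g$ may inspect the \emph{entire} recommended plan --- in particular the recommended actions at the player's \emph{other} types --- so one must arrange the player's randomized play so that those other coordinates carry essentially no information exploitable beyond the low-dimensional summary, i.e.\ so that cross-coordinate deviations beat the coarsened ones by at most $\eps$. This is precisely the step that separates Bayesian games from extensive-form games, where the analogous information is genuinely deep and the lower bound of Theorem~\ref{thm:lb-extensive-main} applies. The remaining steps --- recombining the per-type guarantees via the prior weights, and passing from $\eps$-swap regret to $\eps$-NFCE --- are routine.
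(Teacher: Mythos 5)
Your opening observation---that the per-type rewards are separable (conditioned on type $k$, player $i$'s payoff depends on her plan only through its value at $k$), so one MWU copy per type already gives external regret with no dependence on the number of types---is exactly the paper's key idea (Lemma~\ref{lem:external}). But the heart of your argument is not actually carried out. You reduce the swap benchmark to a ``per-type contextual swap regret'' problem in which the comparator may inspect the entire recommended plan, and then propose to handle it by coarsening the plan space to a $\mathrm{poly}(n/\eps)$-size set of ``effective'' contexts and by arranging the randomized play so that the coordinates at other types ``carry essentially no exploitable information''---and you yourself flag this as the main obstacle. No construction, no coarsening map, and no bound is given for this step, yet it is precisely where the claimed $(\log(n)/\eps^2)^{O(1/\eps)}$ iteration count would have to be proved. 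As written, this is a genuine gap, and the proposed fix points in a direction (restricting/summarizing the comparator's view of the plan) that is not needed and would be awkward to make rigorous, since in an NFCE the swap function is genuinely allowed to depend on all coordinates of the recommended plan.

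For contrast, the paper needs neither context coarsening nor any information-hiding property of the play. It plugs the per-type MWU into the multi-scale thread structure of \cite{peng2023fast,dagan2023external}: $L=1/\eps$ threads, where thread $\ell$ restarts every $H^{\ell}$ days and keeps its product strategy $q_{t,\ell}=w_{t,\ell,1}\times\cdots\times w_{t,\ell,K}$ \emph{fixed} on blocks of length $H^{\ell-1}$. Because the type-$k$ reward depends only on the action played at type $k$, and because $q_{t,\ell}$ is constant within a block, the benchmark $\sum_{s_i}q_{t,\ell}(s_i)\,r_{t,k}(\phi(s_i)(k))$ of an \emph{arbitrary} swap function $\phi:S_i\to S_i$---even one that inspects the recommended actions at all other types---is dominated, block by block, by the best fixed action in hindsight $\bigl\|\sum_{t\in\text{block}}r_{t,k}\bigr\|_{\infty}$ (Eq.~\eqref{eq:regret-reduction2}); and the per-type external-regret guarantee at the next coarser scale nearly attains this same quantity (Lemma~\ref{lem:external}). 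Summing over blocks and averaging over threads yields swap regret $3\eps T$ \emph{for every type} (Lemma~\ref{lem:external-swap}), hence an (every-type-$\eps$)-NFCE, which is even stronger than the prior-weighted guarantee you aim for. If you want to salvage your route, replace the unproven coarsening step by this observation: against a distribution over plans held fixed during a block, any swap comparator's per-type payoff is at most that of the best fixed action for that block, so no restriction of the comparator class, and no argument about what the other coordinates reveal, is required.
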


This separates the complexity of computing $\eps$-CE in Bayesian and extensive-form games. 

\subsection{A brief overview of our approach}

\subsubsection{Lower bound for extensive-form game}
\label{sec:tech}

We first review the previous lower bound of~\cite{anagnostides2024complexity}, which is based on a framework of~\cite{foster2023hardness}. 
Then we give a high level overview of our new ideas.

\paragraph{Previous approach}

Recent work of~\cite{foster2023hardness} gives a new framework for lower bounding the number of iterations for obtaining approximate CCE; \cite{foster2023hardness} originally used this idea to Markov games, and~\cite{anagnostides2024complexity} very recently applied it to extensive form games. These reductions -as well as ours- are based on three key ideas:

\begin{description}
    \item[Low rank CCE] We say that a correlated distribution is {\em low rank} (called ``sparse'' in previous work%
    \footnote{We believe that ``low rank'' is a better term to distinguish from distributions with sparse supports.}) if it can be written as a mixture of a small number of product distributions. Uncoupled learning dynamics where agents independently update their strategies based on the history of play induce correlated distributions whose rank is at most the number of iterations.     
    The reductions prove hardness results for the problem of computing low-rank approximate CCE; as a corollary, computationally efficient uncoupled dynamics cannot approach approximate CCE / low external regret.  

    \item[Repeated game] At a high level, the reductions in~\cite{foster2023hardness, anagnostides2024complexity} begin with a normal-form game where (approximate) Nash equilibrium is hard to find. The reductions construct an extensive form (or Markov) game which sequentially repeats the hard normal-form game $H$ times. Even though it is easy to find an (approximate) CCE in the normal-form game, recall that it is crucial that each player only observes their signal. However if the CCE is low rank, the players can learn each other's signals from their strategies in previous repetitions, breaking the correlation. 

    \item[Kibitzer player] The downside of sequentially repeating the normal-form games is that it brings spurious equilibria which don't exist in the single-shot game; in particular, equilibria where players may be incentivized to play low-payoff actions with threats of retaliations in future repetitions of the game. To circumvent these artifacts of extensive-form games, we would like to re-normalize the payoff of each player in each repetition with respect to other players' strategies, then players are incentivized to play myopically because any (dis)advantage they could obtain in future repetitions will be normalized away anyway. To correctly compute this normalization, ~\cite{borgs2008myth} introduce an additional player, {\em the Kibitzer}: starting from the original game between Alice and Bob,  in each iteration the Kibitzer competes with one player, say Alice
    ; both Kibitzer and Alice are scored relative to each other's strategy when measured against the same Bob strategies. 
\end{description}

\paragraph{Our approach}
In this work, we improve over the reduction from~\cite{anagnostides2024complexity} in both complexity assumption and lower bound on the number of iterations (in our reduction, both are nearly tight). The key to obtaining a better reduction is that we start from $\eps$-Bayesian Nash equilibrium ($\eps$-BNE\footnote{Specifically, we reduce from what we call {\em (every-type-$\eps$)-BNE}, where each player can, {\em at any type}, gain at most $\eps$, by deviating from its current strategy. A formal definition is shown in Definition~\ref{def:eq}, followed by a detailed discussion.}) for a one-shot Bayesian game instead of $\eps$-NE of a normal-form complete information game.
The advantage of starting the reduction from $\eps$-BNE is that it is a much harder problem than $\eps$-NE: While $\eps$-NE can be found in quasipolynomial time~\cite{lipton2003playing}, $\eps$-BNE is \PPAD-complete~\cite{rubinstein2018inapproximability}. 

The reduction from $\eps$-BNE becomes more delicate because it is possible that even off-equilibrium players only have an $\eps$-improving deviation for one type, while for a random/average type their strategy is approximately optimal%
\footnote{This would not be an issue if we start the reduction from what we call {\em (ex-ante-$\eps$)-BNE}, however this more relaxed solution concept is not known to be \PPAD-hard}. Here we give another application of the Kibitzer player: we let Kibitzer find a type that admits an $\eps$-improving deviation. Specifically, for each repetition of the Bayesian game, the Kibitzer first chooses a player, say Alice. Then Kibitzer chooses a type $\theta$ for Alice. Kibitzer's goal is to outperform Alice, on the type it chooses and the action it plays. Nature (the ``chance'' player) then samples the type $\theta'$ of Bob, conditioning on type $\theta$ for Alice. Crucially, this way Kibitzer does not know $\theta'$.
Finally, Kibitzer and Alice choose actions for type $\theta$, and Bob chooses an action for type $\theta'$.

\subsubsection{Algorithm for Bayesian games}

Our algorithm is based on the swap-to-external regret reduction of \cite{peng2023fast, dagan2023external}. The idea is to run multiple instance of low-external-regret algorithms, each instance restarts and updates in different frequency. \cite{peng2023fast, dagan2023external} prove that as long as the low-external-regret algorithm guarantees at most $\eps$-external regret in $H$ iterations, the overall algorithm has $\eps$-swap regret in $H^{O(1/\epsilon)}$ iterations. In a Bayesian game, if each player has $K$ types and $n$ actions, the entire strategy space has size $n^{K}$. By running MWU over the strategy space, it takes $H = O(\log(n^{K})/\eps^2) = O(K\log(n)/\eps^2)$ iterations to guarantee $\eps$-external regret, and therefore, a total number of $(K\log(n)/\eps^2)^{O(1/\eps)}$ iterations.

Recall the number of iterations in Theorem \ref{thm:bayesian-regret-main} is independent of the number of types $K$. Our key idea is, instead of running MWU over the entire strategy space, we run MWU over each type and the final strategy is taken as the product at each type\footnote{This is equivalent to running MWU over the entire strategy space, but assigning different learning rates to each type}. The key observation is, the appearance probability of each type is fixed during no-regret learning, so after taking the product over other types, the external regret guarantee continues to hold (for every type). One can then apply the swap-to-external regret analysis of \cite{peng2023fast, dagan2023external}.

\subsection{Related work}

\paragraph{Decentralized learning in games} 
There is a long line of work regard learning in games \cite{freund1997decision,freund1999adaptive, foster1997calibrated,foster1998asymptotic,foster1999regret,hart2000simple,hart2001reinforcement,cesa2003potential,stoltz2005internal,blum2007external,stoltz2007learning, papadimitriou2008computing, hart2010long, hart2013simple, syrgkanis2015fast,foster2016learning, chen2020hedging, daskalakis2021near, anagnostides2022near, anagnostides2022uncoupled, peng2023near, peng2023fast, dagan2023external}.

Our work focuses on incomplete information games -- Bayesian games and extensive-form games -- a central topic in the literature \cite{forges1993five, hartline2015no, fujii2023bayes,von2008extensive, farina2022simple, zinkevich2007regret,lanctot2009monte, farina2019efficient, farina2019optimistic, farina2022kernelized, bai2022efficient,anagnostides2023near, peng2023fast, dagan2023external} with important real-world applications \cite{ bowling2015heads, brown2018superhuman,brown2019superhuman,brown2019deep}.
For CCE, there are decentralized learning algorithms (e.g \cite{farina2022kernelized}) that converge to the set of $\eps$-CCE of extensive-form games (resp. Bayesian game) in $O(|\mI|\log(n))$ iterations (for constant $\eps >0$).
Our lower bound (Theorem \ref{thm:lb-extensive-main}) gives a near matching lower bound.
For the stronger notion of NFCE, it was often conjectured intractable (e.g. see \cite{von2008extensive}).
The recent work \cite{peng2023fast,dagan2023external} refute this conjecture for constant $\eps > 0$ and give an algorithm that converges to an $\eps$-NFCE in $(|\mI|\log(n)/\eps^2)^{O(1/\epsilon)}$ iterations. Our work builds upon the framework of \cite{peng2023fast,dagan2023external} but improves the number of iterations to $(\log(n)/\eps^2)^{O(1/\epsilon)}$ for Bayesian games, which are completely independent of the number of types.

\paragraph{Lower bounds for games}
The complexity class of \PPAD~(Polynomial Parity Arguments on Directed graphs) was first introduced by Papadimitriou \cite{papadimitriou1994complexity} to capture a particular genre of total search functions. 
Since then, a variety of important problems were shown to be \PPAD-hard \cite{chen2009settling-market, vazirani2011market, chen2013complexity,othman2016complexity, rubinstein2018inapproximability, rubinstein2019hardness, filos2021complexity, daskalakis2021complexity, papadimitriou2021public,deligkas2022pure, chen2022computational, chen2023complexity,  daskalakis2023complexity, jin2023complexity}, a well-known example is the complexity of Nash equilibrium in normal-form games \cite{chen2009settling,daskalakis2009complexity}.

A recent line of work \cite{foster2023hardness, anagnostides2024complexity} studied the complexity of decentralized learning for coarse correlated equilibrium, by examining the computational complexity of finding a low rank coarse correlated equilibrium. \cite{foster2023hardness} initiated this line of work and showed it is \PPAD-hard to find an $\eps$-CCE in Markov game with polynomial rank, for inverse polynomially small $\eps$\footnote{\cite{foster2023hardness} also showed a lower bound for constant $\eps$ under ETH for \PPAD}.
\cite{anagnostides2024complexity} studied the extensive-form game and proved that there is no polynomial time algorithm for finding rank-$2^{\log_{2}^{1/2-o(1)}|\mI|}$ $\eps$-CCE in extensive-form game for some constant $\eps > 0$, under the exponential time hypothesis (ETH) for \PPAD.
Here ETH for \PPAD~is a refinement of the $\P \neq \PPAD$ conjecture, which asserts that the \PPAD-complete problem {\sc End-of-Line} requires exponential time. This conjecture was motivated by the exponential lower bound in the black box model 
and it was used in the work of \cite{rubinstein2016settling} to rule out polynomial time algorithms for finding approximate Nash equilibrium in normal-form games. 
In our work, we use a substantial weaker conjecture, i.e., we only need to assume the complete problem in \PPAD~does not have quasipolynomial time algorithms.

\section{Preliminary}
\label{sec:pre}

\paragraph{Notation} Throughout the paper, we write $[n] = \{1,2, \ldots, n\}$ and $[n_1: n_2] = \{n_1, n_1+1, \ldots, n_2\}$. For any finite set $S$, we use $\Delta(S)$ to denote probability distributions over $S$. For any distribution $p, q$, we use $\TV(p; q)$ to denote the total variation distance between $p$ and $q$.

\subsection{Bayesian game}
In an $m$-player Bayesian game, let $\Theta_i$ $(|\Theta_i| = K)$ be the set of types and let $A_{i}$ ($|A_i| = n$) be the set of actions for player $i$.
Let $\Theta = \Theta_1 \times \cdots \times \Theta_m$ be the set of type profiles of all players.
The type profile $\theta \in \Theta$ is generated from a prior distribution $\rho \in \Delta(\Theta)$.
Let $A = A_{1} \times \cdots \times A_{m}$ be the action profiles of all players.
The utility function $u_i: \Theta \times A \rightarrow [0,1]$ of player $i$ maps a type profile and an action profile to a payoff value.
A strategy $s_i: \Theta_i  \rightarrow A_i$ determines the action $s_i(\theta_i) \in A_i$ to be selected when player $i$ has private type $\theta_i \in \Theta_i$. 
Let $S_i = A_i^{\Theta_i}$ be the set of pure strategies of player $i$ and $S = S_1 \times \cdots \times S_m$ be the entire strategy space. 
We use $\Delta(S_i)$ to denote the set of mixed strategies, and $X_i := \prod_{\theta_i \in \Theta_i}\Delta(A_i)$ to denote the set of behaviour strategies.


A normal-form correlated equilibrium (also known as strategic-form correlated equilibrium) is a joint distribution $\mu \in \Delta(S)$ over the strategy space, such that no player can gain from deviating from the recommend strategy. Formally 
\begin{definition}[Normal-form correlated equilibrium] A distribution $\mu\in \Delta(S)$ is a normal-form correlated equilibrium if for every player  $i\in [m]$ and all swap function $\phi: S_i\rightarrow S_i$, 
\begin{align*}
\E_{\theta\sim \rho} \E_{s\sim \mu}[u_i(\theta; s(\theta))] \geq \E_{\theta\sim \rho} \E_{s\sim \mu}[u_i(\theta; \phi(s_i)(\theta_i), s_{-i}(\theta_{-i}))].
\end{align*}
\end{definition}

A {\em Bayesian Nash equilibrium (BNE)} is the special case of NFCE where $\mu$ is a product distribution, i.e.~each player samples their action independently. More generally, we consider {\em low-rank} distributions that interpolate between a product distribution (rank 1) and general correlated distribution (unbounded rank).

Two natural ways of extending NFCE and BNE to approximate equilibrium have been considered in the literature; the differences are subtle but important for our purpose. The first notion of approximate equilibrium, which we call {\em ex-ante-approximate}, considers the potential benefit for a player from deviating from the recommended strategy in expectation over the Bayesian prior over types. The second, stricter notion, which we call {\em every-type-approximate} considers the potential benefit for a player from deviating -- maximized over her types and in expectation over all other players' types. This distinction is  similar (but orthogonal) to approximate Nash equilibrium and well-supported Nash equilibrium in normal-form games (e.g.~\cite{daskalakis2009complexity}). 

\begin{definition}[Notions of approximate equilibria] \label{def:eq}\hfill
\begin{itemize}
    \item We say that a distribution $\mu$ is an {\em  (ex-ante-$\eps$)-NFCE} (or {\em (ex-ante-$\eps$)-BNE} when $\mu$ is a product distribution) if for every player  $i\in [m]$ and every swap function $\phi: S_i\rightarrow S_i$,
    \begin{align*}
\E_{\theta\sim \rho} \E_{s\sim \mu}[u_i(\theta; s(\theta))] \geq \E_{\theta\sim \rho} \E_{s\sim \mu}[u_i(\theta; \phi(s_i)(\theta_i), s_{-i}(\theta_{-i}))] - \eps.
\end{align*}

\item We say that a distribution $\mu$ is an {\em (every-type-$\eps$)-NFCE} (or {\em (every-type-$\eps$)-BNE} when $\mu$ is a product distribution) if for every player  $i\in [m]$, every swap function $\phi: S_i\rightarrow S_i$, and for every type $k \in \Theta_i$,
\[
\E_{\theta_{-i} \sim (\rho| \theta_i = k)}\E_{s\sim \mu}[u_i(\theta; s(\theta))] \geq \E_{\theta_{-i} \sim (\rho| \theta_i = k)}\E_{s\sim \mu}[u_i(\theta; \phi(s_i)(\theta_i), s_{-i}(\theta_{-i}))] - \eps.
\]

\end{itemize}

\end{definition}

Our algorithm for Bayesian games gives an every-type-approximate NFCE, which is another improvement over~\cite{daskalakis2023complexity, peng2023fast} that compute ex-ante-approximate NFCE. Meanwhile, our hardness for extensive-form games holds also with respect to the more lenient notion of ex-ante approximation.   
A detailed discussions on different notions of correlated equilibrium in Bayesian games can be found at~\cite{fujii2023bayes}.

\subsection{Extensive-form game} 
The extensive-form game extends Bayesian game by incorporating both incomplete information and sequentiality. An $m$-player extensive-form game can be formulated as a directed game tree $\Gamma$. Let $\N$ be all nodes of $\Gamma$, the nodes of the game tree are partitioned into decision nodes and chance nodes $\N= \N_{1} \cup \cdots \N_{m} \cup \N_{c}$. Here $\N_{i}$ ($i \in [m]$) is the set of nodes where player $i$ takes actions and $\N_{c}$ are chance nodes.
The function of a chance node is to assign an outcome of a chance event, and each outgoing edge represents one possible outcome of that chance event as well as the probability of the event. At a decision node, the edges represent actions and the successor states the result from the player taking those actions.
Player $i$'s decision nodes $\N_i$ are further partitioned into a family $\mI_i$ of information sets; for each information set $I \in \mI_i$, Player $i$ chooses an action from $A_i$.
We consider {\em perfect recall} extensive-form games so that each player remembers the history action sequence leads to each information set.
We use $|\mI| = \sum_{i\in [m]}|\mI_i|$ to denote the total number of information sets.
Let $\mS_i = \prod_{I \in \mI_i} A_{i} = (A_i)^{\mI_i}$ be the set of pure strategies of player $i$, and the entire strategy space is $\mS = \prod_{i\in [m]}\mS_i$.

Let $\Delta(\mS_i)$ be the set of {\em mixed strategies}. Note that this notion of mixed strategies is very general in that it allows a player to correlate their actions across information sets that cannot simultaneously occur in the same play of the game. We also consider a slightly more  restricted notion of {\em behaviour strategies} that do not account for such correlations; formally the set of Player $i$'s behaviour strategies is given by $X^{\EFG}_i = \prod_{I\in \mI_i}\Delta(A_i)$.
Given a strategy distribution $\mu \in \Delta(\mS)$, we use $u_i(\mu) \in [0,1]$ to denote the expected utility of player $i$ under $\mu$.

In a coarse correlated equilibrium, no player can gain by switching to another fixed strategy\footnote{This definition is also known as  normal-form coarse correlated equilibrium (NFCCE) in the literature}.
\begin{definition}[Coarse correlated equilibrium]
Let $\eps > 0$, a distribution $\mu \in \Delta(\mS)$ is an $\eps$-approximate coarse correlated equilibrium ($\eps$-CCE) of an extensive-form game, if for all player $i \in [m]$ and strategy $s_{i} \in \mS_i$, 
\[
u_{i}(\mu)  \geq u_{i}(s_{i}, \mu_{-i}) - \eps.
\]
\end{definition}

\begin{remark}[Mixed strategy, behaviour strategy and rank]
In the literature of Bayesian and extensive-form game, a mixed strategy places a distribution over the set of pure strategies, while a behaviour strategy places a distribution over the action set at each information set. 
The Kuhn's Theorem \cite{kuhn11953extensive}, a fundamental result of imperfect information game, states that the behaviour strategy is equivalent to the mixed strategy, in the sense that for any mixed strategy $p_i \in \Delta(\mS_i)$, one can find a behaviour strategy $x_i \in \prod_{\mI_i}\Delta(A_i)$, such that the probability of reaching any terminal node is the same.
This means that:
\begin{itemize}
    \item {\bf CCE <-> behaviour strategies:} When considering the rank of a coarse correlated equilibrium, it suffices to define the rank with respect to behaviour strategy -- one can always replace a mixed strategy with a behaviour strategy and the utility remains the same\footnote{This notion of rank/sparsity is used in the previous work of~\cite{foster2023hardness, anagnostides2024complexity}.}.
    \item {\bf CE <-> mixed strategies:} Meanwhile, when considering decentralized learning of correlated equilibrium, we should still stick with mixed strategies. The reason is that, after replacing a mixed strategy with a behaviour strategy, even though the utility received by each agent is the same, the hypothetical utility after swapping the strategy is not the same! See Appendix \ref{sec:example-app} for an example. 
\end{itemize}
\end{remark}

Formally, we define the rank of a joint distribution as follows.
\begin{definition}[Rank]
Let $\mu \in \Delta(\mS)$ be a joint distribution. When considering CE, we say $\mu$ is of rank $T$,  if it can be written as $\mu = \frac{1}{T}\sum_{t=1}^{T} \bigotimes_{i \in [m]} p_i^{(t)}$ where $p_i^{(t)} \in \Delta(S_i)$ is a mixed strategy; When considering CCE, we say $\mu$ is of rank $T$ if it can be written as 
$\mu = \frac{1}{T}\sum_{t=1}^{T} \bigotimes_{i \in [m]} x_i^{(t)}$,
where $x_{i}^{(t)} \in X_i^{\EFG}$ is a behaviour strategy.
\end{definition}

Note we define the rank of a distribution as the number of {\em uniform mixture} of product distributions. This is wlog -- if the distribution can be written as $k$ non-uniform mixture of production distributions, then one can approximate it with $O(k)$ uniform mixture.

\subsection{Useful tools}
\paragraph{No-regret learning}  We make use of the classic algorithm of Multiplicative Weights Update (MWU) for regret minimization \cite{littlestone1994weighted}. 
\begin{algorithm}[!htbp]
\caption{MWU}
\label{algo:mwu}
\begin{algorithmic}[1] 
\State {\bf Input parameters} $T$ (number of rounds),  $n$ (number of actions), $B$ (bound on payoff) 
\For{$t=1,2, \ldots, T$}
\State Compute $p_t \in \Delta_n$ over experts such that $p_t(i) \propto \exp(\eta \sum_{\tau=1}^{t-1}r_\tau(i))$ for $i\in [n]$
\State Play $p_t$ and observes $r_t \in [0, B]^n$
\EndFor
\end{algorithmic}
\end{algorithm}

\begin{lemma}[\cite{arora2012multiplicative}]
\label{lem:mwu}
Let $n, T \geq 1$ and the reward $r_t \in [0,B]^n$ ($t\in [T]$). 
If one takes $\eta = \sqrt{\log (n)/T}/B$, then the MWU algorithm guarantees an external regret of at most
\begin{align*}
\max_{i^{*}\in [n]}\sum_{t \in [T]}r_t(i^{*}) - \sum_{t \in [T]} \langle p_t, r_t\rangle \leq \frac{\log(n)}{\eta} + \eta T B^2 \leq 2B\sqrt{T\log (n)}.
\end{align*}
\end{lemma}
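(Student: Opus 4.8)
The plan is to run the classical potential-function (weighted-majority) argument. Write the unnormalized weights $w_t(i) := \exp\!\big(\eta \sum_{\tau=1}^{t-1} r_\tau(i)\big)$ for $i \in [n]$ and $t \in [T+1]$, so $w_1 \equiv 1$, the MWU iterate is $p_t(i) = w_t(i)/\Phi_t$ with $\Phi_t := \sum_{i\in[n]} w_t(i)$, and $\Phi_1 = n$; the proof tracks the multiplicative growth of $\Phi_t$. Assume first $T \ge \log n$, so that $\eta B = \sqrt{\log n / T} \le 1$ and hence $\eta r_t(i) \in [0,1]$. Using $w_{t+1}(i) = w_t(i)\, e^{\eta r_t(i)}$, the inequality $e^x \le 1 + x + x^2$ on $[0,1]$, and $x^2 \le (\eta B)^2$,
\[
\Phi_{t+1} = \sum_{i} w_t(i)\,e^{\eta r_t(i)} \le \sum_i w_t(i)\big(1 + \eta r_t(i)\big) + \eta^2 B^2 \sum_i w_t(i) = \Phi_t\big(1 + \eta \langle p_t, r_t\rangle + \eta^2 B^2\big) \le \Phi_t\, e^{\eta \langle p_t, r_t\rangle + \eta^2 B^2},
\]
the last step by $1 + z \le e^z$. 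Telescoping over $t = 1,\dots,T$ yields $\Phi_{T+1} \le n\, \exp\!\big(\eta \sum_{t \in [T]} \langle p_t, r_t\rangle + \eta^2 T B^2\big)$.

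For the matching lower bound, for any fixed $i^* \in [n]$ we have $\Phi_{T+1} \ge w_{T+1}(i^*) = \exp\!\big(\eta \sum_{t\in[T]} r_t(i^*)\big)$. Combining the two bounds on $\Phi_{T+1}$, taking logarithms, and dividing by $\eta > 0$,
\[
\max_{i^*\in[n]} \sum_{t\in[T]} r_t(i^*) - \sum_{t\in[T]} \langle p_t, r_t\rangle \le \frac{\log n}{\eta} + \eta T B^2.
\]
Substituting $\eta = \sqrt{\log n / T}/B$ makes each of the two terms on the right equal to $B\sqrt{T\log n}$, which proves the stated bound $2B\sqrt{T\log n}$.

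It remains to handle $T < \log n$, where the chosen $\eta$ has $\eta B > 1$ and the displayed step bound is not justified; but then the crude estimates $\sum_t r_t(i^*) \le TB$ and $\sum_t \langle p_t, r_t\rangle \ge 0$ (since $r_t \ge 0$) already give external regret at most $TB \le B\sqrt{T\log n}$, so the claim holds for all $n, T \ge 1$. I do not anticipate any genuine obstacle here; the only points needing care are the admissibility range of the second-order estimate $e^x \le 1 + x + x^2$ (which caps the useful learning rate at $\eta B \le 1$ and forces the separate trivial treatment of small $T$) and the balancing of the aggregate error $\eta T B^2$ against $\log n / \eta$ that pins down the optimal $\eta$ and the constant $2$. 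An essentially equivalent alternative replaces the potential $\Phi_t$ by the relative entropy from $p_t$ to a best fixed expert, yielding the same estimate via a mirror-descent one-step inequality.
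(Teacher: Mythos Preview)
Your argument is correct and is the standard potential-function (weighted-majority) proof of the MWU regret bound. Note, however, that the paper does not actually supply its own proof of this lemma: it is quoted as a known result with a citation to \cite{arora2012multiplicative}, so there is no in-paper proof to compare against. Your write-up would serve perfectly well as a self-contained proof in place of that citation; the care you take with the range of validity of $e^x \le 1 + x + x^2$ (forcing $\eta B \le 1$) and the separate trivial treatment of $T < \log n$ is appropriate and often glossed over.
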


\paragraph{Online density estimation} We make use of the online density estimation algorithm \cite{vovk1990aggregating}. In the task of online density estimation, there is a set of outcomes $O$ and a set of possible contexts $C$. The task proceeds in $H$ rounds, and at each round $h \in [H]$,
\begin{itemize}
\item Nature first reveals a context $c_h \in C$
\item The learner predicts a distribution $p_h \in \Delta(O)$ over outcomes based on the context $c_h$; and
\item Nature chooses an outcome $o_h \in O$
\end{itemize}
The learner has access to a set of {\em experts} $\{p^{(i)}\}_{i \in [n]}$, where each expert is a fixed function from context to distribution of outcome. That is, for any context $c \in C$, $p^{(i)}_{c}\in \Delta(O)$ is a distribution of outcome, and $p^{(i)}_{c}(o)$ is the probability of outcome $o \in O$. Vovk's algorithm (see Algorithm \ref{algo:vovk}) places a distribution over experts at each round, by running MWU over the logarithmic loss function.
\begin{algorithm}
\caption{Vovk's aggregating algorithm}
\label{algo:vovk}
\begin{algorithmic}[1]
\For{$h=1,2,\ldots, H$}
\State Observe the context $c_h$
\State Compute a distribution $q_{h} \in \Delta_n$ over experts $[n]$ 
\[
q_{h}(i) = \frac{\exp\left(-\sum_{\nu=1}^{h-1}\log\left(\frac{1}{p^{(i)}_{c_\nu}(o_\nu)}\right)\right)}{\sum_{i'\in [n]}\exp\left(-\sum_{\nu=1}^{h-1}\log\left(\frac{1}{p^{(i')}_{c_\nu}(o_{\nu})}\right)\right)} \quad \forall i\in [n]
\]
\State Aggregate predictions from experts $p_{h} = \E_{i \sim q_h}[p^{(i)}_{c_h}] \in \Delta(O)$ and observe the outcome $o_h \in O$
\EndFor
\end{algorithmic}
\end{algorithm}

\begin{lemma}[\cite{vovk1990aggregating}] 
\label{lem:vovk}
Suppose the distribution of outcomes is realizable under the set of expert $[n]$, i.e., there exists $i^{*}\in [n]$ such that $o_h \sim p^{(i^{*})}_{c_h}$ for each $h \in [H]$. 
Then Vovk's aggregating rule (Algorithm \ref{algo:vovk}) satisfies
\begin{align*}
\frac{1}{H} \sum_{h=1}^{H} \E[ \TV(p_h; p^{(i^{*})}_{c_h})] \leq \sqrt{\frac{\log(n)}{H}}.
\end{align*}
\end{lemma}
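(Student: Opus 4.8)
\textbf{Proof proposal for Lemma~\ref{lem:vovk} (Vovk's aggregating rule).}

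The plan is to run the standard potential-function argument for the exponential-weights aggregation on the logarithmic loss, and then convert the resulting logarithmic-loss regret bound into a bound on average total variation distance via Pinsker's inequality. First I would define the potential $\Phi_h = \sum_{i\in[n]} \exp\bigl(-\sum_{\nu=1}^{h-1}\log(1/p^{(i)}_{c_\nu}(o_\nu))\bigr) = \sum_{i\in[n]}\prod_{\nu=1}^{h-1}p^{(i)}_{c_\nu}(o_\nu)$, with $\Phi_1 = n$. The key one-step identity is that $\Phi_{h+1}/\Phi_h = \E_{i\sim q_h}[\,p^{(i)}_{c_h}(o_h)\,] = p_h(o_h)$, using exactly the definition of $q_h$ in Algorithm~\ref{algo:vovk} and the aggregation $p_h = \E_{i\sim q_h}[p^{(i)}_{c_h}]$. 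Telescoping over $h=1,\dots,H$ gives $\Phi_{H+1} = n\cdot\prod_{h=1}^{H}p_h(o_h)$. On the other hand, $\Phi_{H+1}\ge \prod_{\nu=1}^H p^{(i^*)}_{c_\nu}(o_\nu)$ by keeping only the $i^*$ term. Taking logs, $\sum_{h=1}^H\log\frac{1}{p_h(o_h)} \le \log n + \sum_{h=1}^H\log\frac{1}{p^{(i^*)}_{c_h}(o_h)}$, i.e.\ the cumulative log-loss of the aggregated predictor exceeds that of the best expert by at most $\log n$.

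Next I would take expectations. Under the realizability assumption $o_h\sim p^{(i^*)}_{c_h}$ (conditionally on the history, which in turn fixes $c_h$, $q_h$, and $p_h$), the conditional expectation of $\log\frac{p^{(i^*)}_{c_h}(o_h)}{p_h(o_h)}$ is exactly the KL divergence $\mathrm{KL}\bigl(p^{(i^*)}_{c_h}\,\|\,p_h\bigr)$. Summing and using the tower rule on the inequality above yields
\begin{align*}
\sum_{h=1}^H \E\bigl[\mathrm{KL}(p^{(i^*)}_{c_h}\,\|\,p_h)\bigr] \le \log n.
\end{align*}
Now apply Pinsker's inequality $\TV(p;q)^2 \le \tfrac12\,\mathrm{KL}(p\,\|\,q)$ pointwise, followed by Jensen/Cauchy--Schwarz to move the square outside the average:
\begin{align*}
\frac{1}{H}\sum_{h=1}^H \E[\TV(p_h;p^{(i^*)}_{c_h})] \le \sqrt{\frac{1}{H}\sum_{h=1}^H \E[\TV(p_h;p^{(i^*)}_{c_h})^2]} \le \sqrt{\frac{1}{2H}\sum_{h=1}^H \E[\mathrm{KL}(p^{(i^*)}_{c_h}\,\|\,p_h)]} \le \sqrt{\frac{\log n}{2H}}.
\end{align*}
This is even slightly stronger than the stated bound $\sqrt{\log(n)/H}$ (by the factor $1/\sqrt2$), so it certainly suffices.

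The only genuinely delicate point, and the one I would be most careful about, is the measure-theoretic bookkeeping in the "take expectations" step: one must condition on the $\sigma$-algebra generated by $(c_1,o_1,\dots,c_{h-1},o_{h-1},c_h)$ so that $p_h$ is deterministic given that information and the realizability hypothesis can be invoked to identify $\E[\log(p^{(i^*)}_{c_h}(o_h)/p_h(o_h))\mid \mathcal F_h]$ with a KL divergence; then the tower property collapses the telescoped bound. Everything else — the potential telescoping, Pinsker, Jensen — is routine. One should also note the argument is completely oblivious to the structure of the context set $C$ and uses only that each expert is a fixed map $c\mapsto p^{(i)}_c$, which is exactly the setup described before the lemma.
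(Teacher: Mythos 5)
Your proof is correct and is exactly the standard argument behind the cited result: the paper itself gives no in-line proof of Lemma~\ref{lem:vovk} (it is imported from Vovk's work), and the potential/telescoping identity $\Phi_{h+1}/\Phi_h = p_h(o_h)$ followed by the log-loss regret bound, the conditional-expectation-equals-KL step under realizability, and Pinsker plus Cauchy--Schwarz is precisely how that result is established. Your bound $\sqrt{\log(n)/(2H)}$ is indeed slightly sharper than the stated $\sqrt{\log(n)/H}$, and your care about conditioning on the history so that $p_h$ is measurable before $o_h$ is drawn is the right point to be careful about.
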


\paragraph{The complexity of Bayesian Nash equilibrium}
We make use of the following hardness result of BNE.
\begin{theorem}[\cite{rubinstein2018inapproximability}]
\label{thm:lb-bayesian}
It is PPAD-hard to find an (every-type-$\eps$)-BNE of a two-player, $n$-type, $2$-action Bayesian game for some absolute constant $\eps > 0$.
\end{theorem}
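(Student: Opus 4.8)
Although Theorem~\ref{thm:lb-bayesian} is quoted from~\cite{rubinstein2018inapproximability}, here is the route I would take to prove it. The plan is to \textbf{reduce from the \PPAD-hardness of finding a constant-accuracy Nash equilibrium in a $2$-action polymatrix game whose interaction graph has bounded degree $d$} -- the object delivered by the generalized-circuit machinery underlying~\cite{rubinstein2018inapproximability}, where a node's two actions encode the bits $0,1$, the mixing probability encodes a real in $[0,1]$, and the gadgets are ``brittle'' enough that being $\eps$-optimal \emph{in expectation} against one's neighbours already pins down one's value. The reason this needs no ETH-type assumption (unlike the analogous statement for two-player \emph{normal-form} games) is that a two-player Bayesian game with $N$ types per player has a $\mathrm{poly}(N)$-size description yet ``contains'' an $N$-node polymatrix game, so no costly player-compression step -- and hence no super-polynomial blow-up -- is incurred.

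First I would preprocess the polymatrix game so that its interaction graph is \emph{bipartite}, say with parts $V_A,V_B$, still of bounded degree and preserving approximate equilibria up to a constant-factor change in $\eps$; this is routine by subdividing each edge with ``copy-gate'' ($G_{=}$-type) gadget nodes that at equilibrium merely relay a node's value to its former neighbour. Then I would build the Bayesian game: Alice's type set is $\Theta_A=V_A$, Bob's is $\Theta_B=V_B$, each type having the two actions inherited from the polymatrix game; the prior $\rho$ is -- up to a negligible perturbation supported on non-edge type pairs carrying a trivial constant-payoff subgame, which gives every type positive probability -- uniform over the edge set, so that conditioned on any type $u$ the opponent's type is uniform over $u$'s $\le d$ neighbours; and the utility of the player of type $u$ facing the opponent of type $v$ with action pair $(a,b)$ is exactly the payoff $g_{uv}(a,b)\in[0,1]$ that node $u$ receives in the $\{u,v\}$-bimatrix subgame. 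The description has size $\mathrm{poly}(N)$, so this is a polynomial-time reduction.

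Correctness is a short computation. A product (behaviour) strategy profile $(x_A,x_B)$ of the Bayesian game is literally the same object as a mixed-strategy profile of the polymatrix game -- node $u\in V_A$ plays $x_A(u)\in\Delta(\{0,1\})$, node $v\in V_B$ plays $x_B(v)$ -- and, since only each player's per-type marginals enter any utility, every mixed-strategy BNE induces a behaviour-strategy BNE with the same incentive gap at every type. Unwinding the definition, the every-type-$\eps$ condition for such a profile says exactly that at each type $u$ the mixed action is $\eps$-optimal in expectation over the opponent's conditional type; by the choice of $\rho$ this expectation is $\E_{v\sim u}[g_{uv}(a,x_B(v))]$, i.e.\ $\tfrac{1}{\deg(u)}$ times node $u$'s total polymatrix payoff for $a$, up to the $O(\delta)$ perturbation. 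Hence a deviation gaining $\eps$ at type $u$ corresponds to a deviation gaining about $\deg(u)\,\eps\le d\eps$ for node $u$, so every (every-type-$\eps$)-BNE of the Bayesian game is an $O(d\eps)$-Nash equilibrium of the polymatrix game, and conversely; taking $\eps\approx\eps_0/(2d)$ a small enough absolute constant transfers the \PPAD-hardness, and the restriction to exactly $2$ actions is inherited from the polymatrix instance.

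I expect the \textbf{main obstacle} to be the mismatch between the two payoff models, not the bookkeeping: a node of a graphical game is scored against \emph{all} its neighbours at once, whereas a Bayesian type meets only \emph{one} opponent type per realization. Insisting that the source game be \emph{polymatrix}, so that node payoffs are additive and ``expectation over a uniformly random neighbouring type'' reproduces the normalized sum over neighbours, is what makes the encoding go through; this is also why one must start from the (considerably deeper) constant-$\eps$ polymatrix hardness of~\cite{rubinstein2018inapproximability} rather than from generic normal-form hardness, and in its \emph{brittle} form, since the every-type notion corresponds to $\eps$-optimality in expectation at each node rather than to a well-supported condition. (By contrast, the more lenient ex-ante notion unwinds to an \emph{aggregate} $\eps$-Nash condition over types, for which constant-accuracy hardness is not known -- consistent with the paper's remark that (ex-ante-$\eps$)-BNE is not known to be \PPAD-hard.) The remaining points -- bipartiteness and full support of $\rho$ -- cost only constant factors in $\eps$.
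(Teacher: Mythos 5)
The paper does not prove this statement --- it imports it verbatim from \cite{rubinstein2018inapproximability} --- and your sketch is essentially the argument in that source: start from the constant-$\eps$ \PPAD-hardness for bipartite, bounded-degree, two-action polymatrix games and encode nodes as types of a two-player Bayesian game with the prior uniform over edges, so that per-type $\eps$-optimality coincides, up to constant factors in $\eps$, with per-node approximate-Nash regret. Your bookkeeping (product strategies only enter through per-type marginals, full support of $\rho$, and the need for the ``brittle''/expectation-robust form of the polymatrix hardness rather than only the well-supported version) is correct, so there is nothing to fix.
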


\section{Lower bound for extensive-form game}
\label{sec:lower}

Theorem \ref{thm:lb-extensive-main} is a direct corollary of the following hardness result on computing low rank CCE.

\begin{theorem}
\label{thm:lb-extensive}
Let $\eps > 0$ be an absolute constant. Assuming $\PPAD \not\subset \TIME(n^{\polylog(n)})$, there is no polynomial time algorithm that can find a rank-$2^{\log_2^{1-o(1)}(|\mI|)}$ $\eps$-CCE of a three player extensive-form game.
\end{theorem}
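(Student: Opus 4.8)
The plan is to build a three-player extensive-form game $\Gamma$ from a hard instance of (every-type-$\eps_0$)-BNE (Theorem~\ref{thm:lb-bayesian}, with $\eps_0$ an absolute constant) by sequentially repeating the one-shot Bayesian game $H$ times, with a Kibitzer player to re-normalize payoffs and to choose deviating types, and then argue that a low-rank $\eps$-CCE of $\Gamma$ can be decoded in quasipolynomial time into an (every-type-$\eps_0$)-BNE of the original instance. Concretely, the two ``real'' players are Alice and Bob (each with $K=n$ types and $2$ actions in the base game), and the third player is the Kibitzer. The game proceeds in $H$ stages; in stage $h$, Nature draws a fresh type profile $(\theta_A,\theta_B)\sim\rho$, the Kibitzer publicly selects one of Alice/Bob to challenge, say Alice, together with a type $\theta\in\Theta_A$; Nature then samples $\theta_B'\sim(\rho\mid\theta_A=\theta)$ \emph{privately} (so the Kibitzer never learns $\theta_B'$); finally Alice and Kibitzer each choose an action for type $\theta$ and Bob chooses an action for type $\theta_B'$, and Alice's stage payoff is $u_A(\theta,\theta_B';\cdot,\cdot)$ while the Kibitzer's stage payoff is $u_A(\theta,\theta_B';a_{\mathrm{Kib}},a_B)$ minus Alice's, so that the Kibitzer wants to beat Alice on the very type Alice is forced to play. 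The information-set count is $|\mI| = \mathrm{poly}(n)\cdot c^H$ for a constant branching $c$ (each stage multiplies the tree size by a fixed polynomial factor), so setting $H = \Theta(\log^{1-o(1)}(n))$ gives $2^{\log_2^{1-o(1)}|\mI|} = n^{o(\log n / \log\log n)}$ iterations and a game of size $n^{\mathrm{polylog}(n)}$; a rank-$T$ CCE with $T$ up to this bound is what we rule out.

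The core of the argument is the decoding step, which I would carry out exactly as in the framework of~\cite{foster2023hardness,anagnostides2024complexity} adapted to the Bayesian setting. Suppose $\mu = \frac1T\sum_{t=1}^T \bigotimes_i x_i^{(t)}$ is a rank-$T$ $\eps$-CCE of $\Gamma$ with $\eps$ a suitably small absolute constant (derived from $\eps_0$ and the averaging over $H$ stages). Because the payoffs are re-normalized in each stage against the opponent's strategy, there is no cross-stage incentive transfer, so the $\eps$-CCE condition forces that in a $(1-O(\sqrt\eps))$-fraction of stages $h$, conditioned on a $(1-O(\sqrt\eps))$-fraction of the $T$ components $t$, the induced strategies of Alice and Bob form an $O(\sqrt\eps)$-approximate BNE of the one-shot game \emph{at every type the Kibitzer could have challenged} --- this is where the Kibitzer's type-choice is essential: if at some component and some type Alice had an $\eps_0$-improving deviation, the Kibitzer would challenge exactly that (player, type) pair, gaining $\eps_0$ in that stage against Alice, and (after the cross-stage normalization and averaging over $H$) this contradicts the Kibitzer's own $\eps$-CCE guarantee for $H$ large enough. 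The only subtlety compared to prior work is that the ``signal'' each real player must hide from the Kibitzer is its Bayesian type: since the Kibitzer observes Alice's and Bob's \emph{strategies} in stages $1,\dots,h-1$ but $\mu$ is low rank, conditioning on the realized component $t$ leaves only $\log T \ll K$ bits about the type profile revealed, so for most components the conditional distribution of $(\theta_A,\theta_B)$ in stage $h$ is close (in TV, via a standard potential/chain-rule argument exactly as in Lemma~\ref{lem:vovk}'s use) to the prior $\rho$, and hence the Kibitzer genuinely cannot predict $\theta_B'$ --- which is what makes Alice's and Bob's per-component strategies behave like honest one-shot Bayesian strategies rather than coordinated ones.

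The steps in order: (1) formally define $\Gamma$, verify perfect recall, and compute $|\mI|$ as a function of $H$ and $n$; (2) show any uncoupled dynamics' empirical play after $T$ rounds is a rank-$T$ CCE, reducing Theorem~\ref{thm:lb-extensive-main} to Theorem~\ref{thm:lb-extensive}; (3) the ``soundness'' direction --- from a rank-$T$ $\eps$-CCE, use the Kibitzer's CCE constraint plus the normalization to extract a stage $h$ and a component $t$ at which Alice's/Bob's behaviour strategies $x_A^{(t)},x_B^{(t)}$ restricted to the stage-$h$ subgame form an $O(\sqrt\eps + \sqrt{\log T / H})$-approximate (every-type) BNE, using the low-rank-implies-type-unpredictable lemma for the ``every type'' part; (4) choose $H = \Theta(\log^{1-o(1)} n)$ and $\eps$ constant so that the extracted error is below $\eps_0$, and observe that the extraction is a polynomial-time (in the game size, hence $n^{\mathrm{polylog}(n)}$) procedure given $\mu$; (5) conclude that a $\mathrm{poly}(n^{\mathrm{polylog} n})=n^{\mathrm{polylog}(n)}$-time algorithm for rank-$2^{\log_2^{1-o(1)}|\mI|}$ $\eps$-CCE would solve (every-type-$\eps_0$)-BNE in time $n^{\mathrm{polylog}(n)}$, contradicting $\PPAD\not\subset\TIME(n^{\mathrm{polylog}(n)})$.

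I expect step (3) to be the main obstacle --- specifically, nailing the constants so that averaging the Kibitzer's single-stage advantage over $H$ repetitions, combined with the TV-closeness of the conditional type distribution to $\rho$, still yields an \emph{every-type} (not merely ex-ante) $\eps_0$-deviation for Alice whenever the extracted per-component profile fails to be a BNE; handling the case analysis where the Kibitzer challenges Alice versus Bob, and making sure the re-normalization is implementable inside an extensive-form tree with only a constant blow-up per stage (so that $H$ stages cost only $c^H$ in tree size and the exponent $1-o(1)$ is achieved rather than $1/2-o(1)$), is the delicate engineering that distinguishes this reduction from~\cite{anagnostides2024complexity}.
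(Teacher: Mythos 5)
Your construction is essentially the paper's: reduce from (every-type-$\eps_0$)-BNE of a two-player, $n$-type, two-action Bayesian game; add a Kibitzer who picks a player, a type for that player, and an action; let Nature draw the other player's type from the conditional prior \emph{after} the Kibitzer moves (so the Kibitzer never sees it); score the Kibitzer relative to the challenged player; repeat the gadget $H$ times; and decode a low-rank CCE by online-learning the mixture component. So the route is the right one, but two parts of your plan have genuine problems.

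First, the parameters. You assert $|\mI|=\mathrm{poly}(n)\cdot c^H$ with \emph{constant} branching $c$ and take $H=\Theta(\log^{1-o(1)}n)$, and you suggest that constant per-stage blow-up is what buys the exponent $1-o(1)$. Neither is right: the Kibitzer must choose among $\Theta(n)$ (player, type, action) triples and Nature's conditional draw has support up to $n$, so each stage multiplies the number of information sets by $n^{\Theta(1)}$ and $|\mI|=n^{\Theta(H)}$. Then $\log T=\log_2^{1-o(1)}|\mI|=\Theta\bigl((H\log n)^{1-o(1)}\bigr)$, and with your $H=\log^{1-o(1)}n$ the requirement $H\gtrsim \log(T)/\eps^2$ (which the Vovk-style density-estimation error bound needs so that the learned predictions are $O(\eps)$-accurate on average) fails. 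The correct choice is a \emph{larger} polylog, $H=(\log n/\delta^2)^{1/\alpha}$ when ruling out rank $2^{\log_2^{1-\alpha}|\mI|}$, giving a quasi-polynomial-size game $|\mI|=n^{O(H)}=n^{\polylog(n)}$; this is exactly why the assumption is $\PPAD \not\subset \TIME(n^{\polylog(n)})$, and the improvement to exponent $1-o(1)$ comes from the constant-$\eps$ \PPAD-hardness of every-type-BNE (few actions, many types), not from keeping the per-stage blow-up constant.

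Second, the soundness step as you sketch it has two misdirected pieces. (i) Your ``only $\log T$ bits about the type profile leak'' argument is aimed at the wrong quantity: types are drawn fresh by Nature in every stage, so there is nothing about the current stage's types to protect across stages; what low rank buys is that the \emph{component} $t$ (equivalently, the opponents' behaviour strategies) can be learned online via Vovk's rule, and the decoded BNE candidate is the product of the per-player, per-type marginals under the learned posterior. (ii) Showing the Kibitzer has a profitable deviation whenever no decoded profile is an approximate BNE does not by itself contradict anything: the CCE condition only gives $u_{\mathrm{Kib}}(\mu)\geq u_{\mathrm{Kib}}(\mathrm{dev})-\eps$, so you also need an \emph{upper} bound on $u_{\mathrm{Kib}}(\mu)$. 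The paper gets it by lower-bounding Alice's and Bob's equilibrium utilities (each $\geq -O(\eps)$) via their own Vovk-based deviations and then using the fact that the three stage payoffs sum to zero — which in turn requires giving the challenged player the \emph{relative} payoff $u_i(\theta;a_i,a_{i'})-u_i(\theta;\wt a_i,a_{i'})$ (and the bystander $0$), not the raw $u_A$ as in your description. Your outline omits this whole half of the argument, and with raw payoffs the zero-sum bookkeeping it relies on breaks.
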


\subsection{Construction}

We reduce (every-type-$\eps$)-BNE to low rank CCE of extensive-form game. 
Given a two-player Bayesian game $\mathcal{G}$, with type space $\Theta = \Theta_1 \times \Theta_2$, action space $A = A_1 \times A_2$ and prior distribution $\rho$, we construct a three-player extensive-form game $\Gamma$.

\paragraph{Gadget}
We first construct a three-player extensive-form game gadget that incorporates the Bayesian game.
Among the first two players, we use $i \in [2]$ to represent a player and we use $i'$ to denote the other player than $i$, i.e., $i' = 3-i \in [2]$.
We use the special symbol $\K = 3$ to represent the third player Kibitzer. 

The extensive-form game proceeds as follows. Kibitzer first picks a player $i \in [2]$, a type $\theta_i \in \Theta_{i}$ as well as an action $\wt{a}_i \in A_i$, that is, the action set $A_{\K}$ is defined as
\[
A_{\K} = \{(i, \theta_i, \wt{a}_i), i\in [2], \theta_i \in \Theta_i, \wt{a}_i \in A_i\}.
\]
Nature then samples $\theta_{i'}$  from $\rho$ conditioning on $\theta_i$. Then player $i$ (resp. $i'$) observes its private type $\theta_i$ (resp. $\theta_{i'}$) and plays action $a_{i}$ (resp. $ a_{i'}$). We use $O = A_1\times A_2\times A_{\K}\times \Theta_1\times \Theta_2$ to denote the outcome space of the gadget.

When Kibitzer plays $a_{\K} = (1,\theta_1, \wt{a}_1)$, then the utility of player $j \in [3]$ equals
\begin{align}
u_j(a_1, a_2, a_\K, \theta_1, \theta_2) = \left\{
\begin{matrix}
u_{1}(\theta_1, \theta_2; a_{1}, a_{2}) - u_{1}(\theta_1, \theta_2; \wt{a}_{1}, a_{2}) & j = 1\\
0 & j = 2\\
u_{1}(\theta_1, \theta_2; \wt{a}_{1}, a_{2}) - u_{1}(\theta_1, \theta_2; a_{1}, a_{2}) & j = K\\
\end{matrix}
\right.. \label{eq:utility1}
\end{align}
When Kibitzer plays $a_{\K} = (2,\theta_2, \wt{a}_2)$, the utility of player $j \in [3]$ is defined similarly
\begin{align}
u_j(a_1, a_2, a_\K, \theta_1, \theta_2) = \left\{
\begin{matrix}
0 & j = 1\\
u_{2}(\theta_1, \theta_2; a_{1}, a_{2}) - u_{2}(\theta_1, \theta_2; a_{1}, \wt{a}_{2}) & j = 2\\
u_{2}(\theta_1, \theta_2; a_{1}, \wt{a}_{2}) - u_{2}(\theta_1, \theta_2; a_{1}, a_{2}) & j = K\\
\end{matrix}
\right.. \label{eq:utility2}
\end{align}

\paragraph{Final construction} Let $T$ be the rank of the CCE. The final extensive-form game $\Gamma$ is constructed by repeating the gadget for $H \geq \frac{\log(T)}{\eps^2}$ times. For each repetition $h \in [H]$, let $N_h$ be the collection of gadgets at the $h$-th repetition, corresponding to different paths of play in earlier repetition. 
For any gadget $n_h \in N_h$, let $o_{h} = (a_{1,h}, a_{2,h}, a_{\K, h}, \theta_{1,h}, \theta_{2,h}) \in O$ be the outcome of the $h$-th repetition.
The utility of each of the three players is determined by the average over the gadget utilities defined Eq.~\eqref{eq:utility1}\eqref{eq:utility2}, averaged across the $H$ repetitions; we denote them as $(u_{j, n_h})_{j\in [3]}$.

\paragraph{Notation} 
A gadget $n_h \in N_h$ is uniquely determined by its history $\sigma(n_h) = (n_1, o_{1}, \ldots, n_{h-1}, o_{h-1})$, where $n_\tau \in N_{\tau}$ ($\tau \in [h-1]$) is the gadget at the root path to $n_h$ and $o_{\tau}$ is the outcome at $n_{\tau}$.
We use $\mS_{j}$ to denote the strategy space of player $j \in [3]$ in the extensive-form game $\Gamma$, and $S_i$ to denote the strategy space of player $i \in [2]$ in the Bayesian game $\mathcal{G}$. 
For player $i \in [2]$, we have $\mS_i = \prod_{h\in [H]}\prod_{n_h \in N_h} S_i$ and $X_i^{\EFG} = \prod_{h\in [H]}\prod_{n_h \in N_h} X_i$, i.e., the pure/behaviour strategy space is the product of pure/behaviour strategy space at each gadget.
Given a behaviour strategy $x_i \in X_i^{\EFG}$, for any gadget $n_h \in N_h$, we write $x_{i, n_h} \in X_i$ to denote the behaviour strategy of $x_i$ on $n_h$, i.e., $x_{i, n_h}$ is the marginal distribution of $x_{i}$ on $n_h$.
For player $\K$, we have $\mS_{\K} = \prod_{h\in [H]}\prod_{n_h \in N_h} A_{\K}$, $X_\K^{\EFG} = \prod_{h\in [H]}\prod_{n_h \in N_h} \Delta(A_\K)$, and we define $x_{\K, n_h} \in \Delta(A_{\K})$ similarly. 

Given a rank-$T$ $\eps$-CCE $\mu$ of the extensive-form game $\Gamma$, we write 
\begin{align*}
\mu = \sum_{t\in [T]} x^{(t)} = \sum_{t\in [T]} x_{1}^{(t)} \otimes x_{2}^{(t)} \otimes x_{\K}^{(t)},
\end{align*}
where $x_{j}^{(t)} \in X_j^{\EFG}$ ($j \in [3]$) is a behaviour strategy of player $j$.

\subsection{Analysis of players 1,2}
We first prove that at any approximate CCE, the  utilities of the first two players are not far below zero. To this end, for player $i \in [2]$, we consider the deviation strategy $x_{i}^{\dagger} \in \mS_{i}$, whose construction is shown at Algorithm \ref{algo:deviation-two}. For any product distribution $t \in [T]$,
player $i \in [2]$, let $x_{-i}^{(t)} = x^{(t)}_{i'}\times x^{(t)}_{\K}$. 
Let $O_{-i} = A_{i'}\times A_{\K}\times \Theta_1\times \Theta_2$ be the outcome space of player $\K$, player $i'$ and the chance player.
For any gadget $n_h \in N_h$, let $x_{-i, n_h}^{(t)} \in \Delta(O_{-i})$ be the distribution of outcome (except player $i$) at gadget $n_h$ under strategy $x_{-i}^{(t)}$.

\begin{algorithm}[!htbp]
\caption{Deviation strategy of player $1,2$ \Comment{Analysis only}}
\label{algo:deviation-two}
\begin{algorithmic}[1]
\State {\bf Input}: rank-$T$ $\eps$-CCE $\mu = \frac{1}{T}\sum_{t\in [T]} x_1^{(t)}\otimes x_{2}^{(t)} \otimes x_{\K}^{(t)}$, $i \in [2]$
\For{$h = 1,2,\ldots, H$} \Comment{repetition $h \in [H]$}
\For{$n_h \in N_{h}$} \Comment{gadget $n_h$}
\State $\sigma(n_h) = (n_1, o_1, \ldots, n_{h-1}, o_{h-1})$ \Comment{history of $n_h$}
\State Compute a distribution $q_{i, n_h} \in \Delta(T)$ over $[T]$ \Comment{Vovk's rule}
\[
q_{-i, n_h}(t) = \frac{\exp\left(-\sum_{\nu=1}^{h-1}\log\left(\frac{1}{x_{-i,n_v}^{(t)}(o_{-i, \nu})}\right)\right)}{\sum_{t'\in [T]}\exp\left(-\sum_{\nu=1}^{h-1}\log\left(\frac{1}{x_{-i, n_v}^{(t')}(o_{-i,\nu})}\right)\right)} \quad t \in [T] 
\]  
\State Compute $p_{-i, n_h} = \E_{t\sim q_{-i, n_h}}[x_{-i, n_{h}}^{(t)}] \in \Delta(O_{-i})$  \Comment{predicted outcome}
\State Define the deviation strategy $x^{\dagger}_{i, n_h} \in S_i$ at gadget $n_h$
\begin{align*}
x_{i, n_h}^{\dagger} = &~ \argmax_{s_i\in S_{i}} \E_{(a_{i'},a_{\K}, \theta_1, \theta_2) \sim p_{-i, n_h}}[u_{i, n_h}(s_i(\theta_i), a_{i'}, a_{\K}, \theta_1,\theta_2)]
\end{align*}
\EndFor
\EndFor
\end{algorithmic}
\end{algorithm}

First, we prove
\begin{lemma}
\label{lem:two-tv}
For player $i\in [2]$, product distribution $t \in [T]$, we have
\begin{align*}
\frac{1}{H} \sum_{h\in [H]} \E_{x_i^{\dagger} \times x_{-i}^{(t)}} \left[ \TV(p_{-i, n_h}; x_{-i, n_h}^{(t)}) \right] \leq \sqrt{\frac{\log(T)}{H}}.
\end{align*}
Here $n_1, n_2, \ldots, n_H$ is the (random) sequence of gadgets visited under strategy $x_i^{\dagger} \times x_{-i}^{(t)}$.
\end{lemma}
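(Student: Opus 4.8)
The plan is to recognize Lemma~\ref{lem:two-tv} as a direct instantiation of Vovk's aggregating bound (Lemma~\ref{lem:vovk}). The construction of the deviation strategy $x_i^\dagger$ in Algorithm~\ref{algo:deviation-two} is precisely an online density estimation procedure, so the work is mostly in setting up the correspondence correctly and checking that the realizability hypothesis holds.

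First I would fix the product component $t \in [T]$ and player $i \in [2]$, and set up the online density estimation instance. The ``time horizon'' is $H$ (the number of repetitions of the gadget). At repetition $h$, the context $c_h$ is the history $\sigma(n_h) = (n_1, o_1, \ldots, n_{h-1}, o_{h-1})$, which determines the gadget $n_h$ currently reached; the outcome $o_h$ is $o_{-i,h} \in O_{-i}$, the realized action profile of the other player $i'$, the Kibitzer, and the chance moves at gadget $n_h$. The experts are indexed by $t' \in [T]$: expert $t'$ predicts, given the context (equivalently, given the gadget $n_h$), the distribution $x^{(t')}_{-i, n_h} \in \Delta(O_{-i})$, i.e. the marginal of $x^{(t')}_{-i}$ at that gadget. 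With this dictionary, the distribution $q_{-i,n_h}$ computed in Algorithm~\ref{algo:deviation-two} is exactly Vovk's weights $q_h$, and $p_{-i,n_h} = \E_{t' \sim q_{-i,n_h}}[x^{(t')}_{-i,n_h}]$ is exactly Vovk's aggregated prediction $p_h$.

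Next I would verify the realizability hypothesis of Lemma~\ref{lem:vovk}. Under the strategy profile $x_i^\dagger \times x_{-i}^{(t)}$, the players other than $i$ (namely $i'$, $\K$, and chance) play according to the \emph{fixed} product behaviour strategy $x_{-i}^{(t)} = x_{i'}^{(t)} \otimes x_{\K}^{(t)}$ together with nature's conditional sampling. Hence, conditioned on reaching gadget $n_h$ (which is determined by the context), the outcome $o_{-i,h}$ is distributed exactly as $x^{(t)}_{-i,n_h}$ — this is expert $i^* = t$. The one point that needs a sentence of care is that the realized sequence of gadgets $n_1, \ldots, n_H$ is random (it depends on the play, including $i$'s own deviation), but conditioning on the context at each step fixes the current gadget, and realizability only requires that $o_h \sim p^{(i^*)}_{c_h}$ conditionally — which holds because $x_{-i}^{(t)}$ is a behaviour strategy whose marginal at any gadget depends only on that gadget. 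Applying Lemma~\ref{lem:vovk} with $n = T$ and horizon $H$ then gives $\frac{1}{H}\sum_{h\in[H]} \E[\TV(p_h; p^{(i^*)}_{c_h})] \le \sqrt{\log(T)/H}$, which is literally the claimed inequality once we translate back via the dictionary above.

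I do not expect a serious obstacle here — the lemma is essentially a bookkeeping reduction to a black-box tool. The only mild subtlety, and the thing I would be most careful to state cleanly, is the interaction between the randomness of the gadget sequence and the conditional realizability: one must observe that the \emph{perfect recall} structure makes the context $\sigma(n_h)$ reveal exactly which gadget is active, so that the ``expert predictions as a function of context'' framework of Lemma~\ref{lem:vovk} applies verbatim, with the expectation in the conclusion taken over the randomness of both players' play (including $i$'s deviation and nature).
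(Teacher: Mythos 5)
Your proposal is correct and matches the paper's own proof: the paper likewise fixes $t$, sets up the online density estimation instance with outcome space $O_{-i}$, experts $\{x_{-i,\cdot}^{(t')}\}_{t'\in[T]}$, and contexts given by the visited gadgets, notes that the outcome at each gadget is drawn from $x_{-i,n_h}^{(t)}$ (realizability by expert $t$), and concludes by invoking Lemma~\ref{lem:vovk} with $n=T$ and horizon $H$. Your extra remark about the random gadget sequence and conditional realizability is exactly the (implicit) bookkeeping the paper relies on.
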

\begin{proof}
Consider the following online density estimation procedure. The outcome space is $O_{-i}$, the set of experts are $T$ strategies $\{x_{-i, n_h}^{(t)}\}_{t\in [T]}$, the context comes from the set of gadgets $\cup_{h\in [H]}N_h$. At each round $h \in [H]$, 
\begin{itemize}
\item Nature provides the context $n_h$
\item The algorithm predicts the distribution $p_{-i, n_h} \in \Delta(O_{-i})$ over outcome space $O_{-i}$;  
\item Nature reveals the outcome $o_{-i, h} = (a_{i'}, a_{\K}, \theta_1, \theta_2) \sim x_{-i, n_h}^{(t)}$. 
\item The next context $n_{h+1} = n_h | (o_{-i,h}, x_{i}^{\dagger}(\theta_i))$
\end{itemize}
By Lemma \ref{lem:vovk}, we have
\begin{align*}
\frac{1}{H}\sum_{h=1}^{H}\E_{x_i^{\dagger} \times x_{-i}^{(t)}} \left[  \TV(p_{-i, n_h}; x_{-i, n_h}^{(t)}) \right]  \leq \sqrt{\frac{\log(T)}{H}}.
\end{align*}
\end{proof}

 Now we can bound the utility of the first two players at $\eps$-CCE.
\begin{lemma}[Equilibrium utility] \label{lem:utility}
The utility at $\eps$-CCE of player $i \in [2]$ satisfies $u_i(\mu) \geq -5\eps$. 
\end{lemma}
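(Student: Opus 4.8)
The plan is to lower bound $u_i(\mu)$ by comparing the equilibrium payoff of player $i$ against the payoff player $i$ would obtain by playing the deviation strategy $x_i^{\dagger}$ of Algorithm~\ref{algo:deviation-two}, and then to argue that this deviation payoff is itself close to $0$. Since $\mu$ is an $\eps$-CCE, we have $u_i(\mu) \geq u_i(x_i^{\dagger}, \mu_{-i}) - \eps = \frac1T\sum_{t\in[T]} u_i(x_i^{\dagger}, x_{-i}^{(t)}) - \eps$. So it suffices to show $\frac1T\sum_t u_i(x_i^{\dagger}, x_{-i}^{(t)}) \geq -4\eps$, i.e., that against each product component $x_{-i}^{(t)}$ the deviation strategy does not lose much more than $O(\eps)$ on average over the $H$ repetitions.

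Fix $t$. In repetition $h$ at gadget $n_h$, the deviation strategy plays $x_{i,n_h}^{\dagger}$, which is the \emph{exact best response at every type} to the \emph{predicted} opponent-and-chance outcome distribution $p_{-i,n_h}$. The first key observation is that, against the \emph{true} distribution $x_{-i,n_h}^{(t)}$, this best-response-to-the-prediction still does well: its payoff differs from the best response to $x_{-i,n_h}^{(t)}$ by at most $2\,\TV(p_{-i,n_h}; x_{-i,n_h}^{(t)})$ (payoffs lie in a bounded range — here the gadget utilities of Eq.~\eqref{eq:utility1}\eqref{eq:utility2} lie in $[-1,1]$, so a TV change of $\delta$ changes expected payoff by at most $2\delta$). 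The second key observation is the crucial one about the gadget structure: for player $i$, the gadget payoff in Eq.~\eqref{eq:utility1}\eqref{eq:utility2} is \emph{either identically $0$} (when Kibitzer picks the other player) \emph{or equals} $u_i(\theta;a_i,a_{-i}) - u_i(\theta; \tilde a_i, a_{-i})$ (when Kibitzer picks player $i$ and proposes $\tilde a_i$). In the latter case, player $i$ can always match the action $\tilde a_i$ that Kibitzer is proposing for the very type $\theta_i$ Kibitzer selected — and $a_i = \tilde a_i$ makes the gadget payoff exactly $0$. Since Kibitzer's choice $(i,\theta_i,\tilde a_i)$ is part of the outcome $o_{-i}$ that $p_{-i,n_h}$ is a distribution over, the best response to $p_{-i,n_h}$ achieves expected gadget payoff $\geq 0$ against $p_{-i,n_h}$ at every type. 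Hence against the true distribution it achieves $\geq -2\,\TV(p_{-i,n_h};x_{-i,n_h}^{(t)})$.

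Averaging this bound over $h\in[H]$ and taking the expectation over the random play path under $x_i^{\dagger}\times x_{-i}^{(t)}$, Lemma~\ref{lem:two-tv} gives
\[
\frac1H\sum_{h\in[H]} \E_{x_i^{\dagger}\times x_{-i}^{(t)}}\!\big[u_{i,n_h}\big] \;\geq\; -\frac2H\sum_{h\in[H]}\E_{x_i^{\dagger}\times x_{-i}^{(t)}}\!\big[\TV(p_{-i,n_h};x_{-i,n_h}^{(t)})\big] \;\geq\; -2\sqrt{\tfrac{\log T}{H}} \;\geq\; -2\eps,
\]
using $H\geq \log(T)/\eps^2$. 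The left-hand side is exactly $u_i(x_i^{\dagger}, x_{-i}^{(t)})$ since the overall utility is the average of gadget utilities across repetitions. Averaging over $t\in[T]$ and feeding into the $\eps$-CCE inequality yields $u_i(\mu)\geq -2\eps - \eps = -3\eps$; the stated $-5\eps$ leaves slack for the fact that $x_i^{\dagger}$ is formally a deviation in $\mS_i$ that must also be convertible between mixed/behaviour representations and for any lower-order losses (e.g., the $\argmax$ being over $S_i$ rather than achieving exactly $0$ per type if one is careful about conditioning on $\theta_i$). The main obstacle I anticipate is bookkeeping the conditioning correctly: Nature samples $\theta_{i'}$ conditioned on the type $\theta_i$ that Kibitzer chose, so when player $i$ best-responds "at every type" one must check that the type $\theta_i$ appearing in the gadget payoff is consistent with the type Kibitzer announced in $a_{\K}$, which is exactly why the reduction starts from an every-type-$\eps$-BNE rather than an ex-ante one — and why matching $\tilde a_i$ is legitimate per-type.
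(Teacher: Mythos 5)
There is a genuine gap at the heart of your argument: the claim that ``the best response to $p_{-i,n_h}$ achieves expected gadget payoff $\geq 0$ against $p_{-i,n_h}$.'' The justification you give --- that player $i$ can ``match'' Kibitzer's proposed action $\wt{a}_i$ --- is not available: player $i$'s strategy in the gadget is a function of its type $\theta_i$ only; it observes neither $a_\K$ nor whether Kibitzer selected it, so when Kibitzer's proposal is randomized, matching is not a feasible strategy. More fundamentally, the statement itself is false in general. The prediction $p_{-i,n_h}=\E_{t\sim q_{-i,n_h}}[x^{(t)}_{-i,n_h}]$ is a \emph{mixture} of product distributions, and under such a mixture Kibitzer's proposed action $\wt{a}_i$ can be correlated with player $i'$'s action $a_{i'}$ (and with $\theta_{i'}$). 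Since player $i$'s gadget payoff is the difference $u_i(\theta;a_i,a_{i'})-u_i(\theta;\wt{a}_i,a_{i'})$, a correlated benchmark can strictly beat every fixed (type-measurable) response: e.g., if conditionally on $\theta_i$ the mixture makes $\wt{a}_i$ always equal to $a_{i'}$ in a matching-type payoff, the Kibitzer term is $1$ while no single $a_i$ exceeds $1/2$, so the best response to $p_{-i,n_h}$ earns strictly negative payoff against $p_{-i,n_h}$. This correlation created by mixing is precisely the phenomenon the whole reduction is built around, so it cannot be waved away; note also that your anticipated difficulty (type/conditioning bookkeeping) is not where the problem lies.

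The paper avoids this by invoking nonnegativity only at the \emph{product} component $x^{(t)}_{-i,n_h}$, where $x^{(t)}_{\K,n_h}$ is independent of $x^{(t)}_{i',n_h}$, so the per-type maximum over $a_i$ weakly dominates Kibitzer's independent randomized proposal (this is Eq.~\eqref{eq:non-negative}). It then telescopes $u_{i,n_h}(x^{\dagger}_{i,n_h},x^{(t)}_{-i,n_h})$ through the predicted distribution twice: one TV term to move $x^{\dagger}_{i,n_h}$ from $x^{(t)}_{-i,n_h}$ to $p_{-i,n_h}$, the optimality of $x^{\dagger}_{i,n_h}$ against $p_{-i,n_h}$ to compare with $s^{(t)}_{i,n_h}$ (the best response to the true component), and a second TV term to move $s^{(t)}_{i,n_h}$ back to $x^{(t)}_{-i,n_h}$, yielding $-\tfrac{4}{H}\TV$ per gadget, hence $-4\eps$ via Lemma~\ref{lem:two-tv} and $-5\eps$ after the CCE step. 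Your proof can be repaired exactly this way --- insert the comparison with $s^{(t)}_{i,n_h}$ rather than asserting nonnegativity at $p_{-i,n_h}$ --- at the cost of your claimed $-3\eps$ becoming the paper's $-5\eps$.
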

\begin{proof}
For any player $i\in [2]$, production distribution $t \in [T]$, gadget $n_h \in N_h$, define 
\[
s^{(t)}_{i, n_h} = \argmax_{s_i \in S_i} \E_{(a_{i'}, a_{\K}, \theta_1,\theta_2) \sim x_{-i, n_h}^{(t)}}u_{i, n_h}(s_{i}(\theta_i), a_{i'}, a_{\K}, \theta_1, \theta_2).
\]
That is $s^{(t)}_{i, n_h} \in S_{i}$ is the best strategy of player $i$, given that the other players use strategies $x_{i', n_h}^{(t)} \times x_{\K, n_h}^{(t)}$. We claim that
\begin{align}
u_{i, n_h}(s^{(t)}_{i, n_h}, x^{(t)}_{-i, n_h}) \geq 0. \label{eq:non-negative}
\end{align}
This is because player $i$ can always best-respond to $x_{i', n_h}^{(t)}$ in the Bayesian game, and the strategy $x_{\K, n_h}^{(t)}$ of Kibitzer is independent of $x_{i'}^{(t)}$ so it cannot find a better action.

Now we compute its utility under deviation $x_{i}^{\dagger}$.  For any $t \in [T]$, gadget $n_h\in N_h$, we have
\begin{align}
&~ u_{i, n_h}(x_{i, n_h}^{\dagger}, x_{-i, n_h}^{(t)}) \notag \\
= &~ u_{i, n_h}(x_{i, n_h}^{\dagger}, x_{-i, n_h}^{(t)})- u_{i, n_h}(x_{i, n_h}^{\dagger}, p_{-i, n_h}) + u_{i, n_h}(x_{i, n_h}^{\dagger}, p_{-i, n_h}) - u_{i, n_h}(s_{i, n_h}^{(t)}, p_{-i, n_h})\notag \\
&~ + u_{i, n_h}(s_{i, n_h}^{(t)}, p_{-i, n_h}) - u_{i, n_h}(s_{i, n_h}^{(t)}, x_{-i, n_h}^{(t)}) + u_{i, n_h}(s_{i, n_h}^{(t)}, x_{-i, n_h}^{(t)}) \notag \\
\geq &~ - \frac{2}{H}\TV(p_{-i, n_h}; x_{-i, n_t}^{(t)}) + 0  - \frac{2}{H}\TV(p_{-i, n_h}; x_{-i, n_t}^{(t)}) + u_{i, n_h}(s_{i, n_h}^{(t)}, x_{-i, n_h}^{(t)})\notag \\
\geq &~ -\frac{4}{H} \TV(p_{-i, n_h}; x_{-i, n_t}^{(t)}).\label{eq:two-utility1}
\end{align}
The second step follows from (i) the value of $u_{i, n_h}$ is between $[-\frac{1}{H}, \frac{1}{H}]$ and (ii) $x_{i, n_h}^{\dagger}$ is the best response under $p_{-i, n_h}$. The third step follows from Eq.~\eqref{eq:non-negative}.

Taking an expectation, we have
\begin{align*}
u_{i}(x_i^{\dagger}, x_{-i}^{(t)}) = &~ \sum_{h=1}^{H} \E_{x_i^{\dagger}\times x_{-i}^{(t)}} \left[u_{i, n_h}(x_{i, n_h}^{\dagger}, x_{-i, n_h}^{(t)})\right] \\
\geq &~ -\frac{4}{H}\sum_{h=1}^{H} \E_{x_i^{\dagger}\times x_{-i}^{(t)}} \left[\TV(p_{-i, n_h}; x_{-i,n_h}^{(t)} ) \right]\\
\geq &~ -4\sqrt{\frac{\log(T)}{H}} \geq -4\eps.
\end{align*}
Here the second step follows from Eq.~\eqref{eq:two-utility1}, the third step follows from Lemma \ref{lem:two-tv}.

Taking an expectation over $t \sim [T]$, we have
\begin{align*}
u_{i}(x_i^{\dagger}, \mu_{-i}) = \frac{1}{T}\sum_{t\in [T]}u_{i}(x_i^{\dagger}, x_{-i}^{(t)}) \geq -4\eps.
\end{align*}
Note $\mu$ is an $\eps$-CCE, we have $u_i(\mu) \geq -5\eps$
\end{proof}

A direct corollary of Lemma \ref{lem:utility} is that the utility of player $\K$ at $\eps$-CCE is not much above $0$.
\begin{lemma}
\label{lem:eq-utility-k1}
The utility at $\eps$-CCE of player $\K$ satisfies $u_{\K}(\mu) \leq 10\eps$. 
\end{lemma}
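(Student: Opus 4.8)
The plan is to derive Lemma~\ref{lem:eq-utility-k1} almost immediately from Lemma~\ref{lem:utility} by exploiting the zero-sum-like structure of the utilities in the gadget. The key observation is that by construction (Eq.~\eqref{eq:utility1} and Eq.~\eqref{eq:utility2}), in every gadget $n_h$ and for every outcome profile, the gadget utility of Kibitzer is exactly the negation of the sum of the gadget utilities of players $1$ and $2$; indeed, when $a_\K = (1,\theta_1,\wt a_1)$ we have $u_1 + u_2 + u_\K = u_1 - u_1 = 0$ (since $u_2 = 0$), and symmetrically when $a_\K = (2,\theta_2,\wt a_2)$ we have $u_1 + u_2 + u_\K = 0$. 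Since the final utilities are just averages of gadget utilities over the $H$ repetitions, linearity of expectation gives $u_1(\mu) + u_2(\mu) + u_\K(\mu) = 0$ for \emph{every} distribution $\mu \in \Delta(\mS)$, in particular for the $\eps$-CCE.

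From here the proof is one line: $u_\K(\mu) = -u_1(\mu) - u_2(\mu) \leq 5\eps + 5\eps = 10\eps$, using Lemma~\ref{lem:utility} which gives $u_i(\mu) \geq -5\eps$ for both $i \in [2]$. First I would state the pointwise identity $u_1 + u_2 + u_\K \equiv 0$ on the gadget outcome space $O$, verifying it in both cases of Kibitzer's action choice. Then I would lift it to $\Gamma$ by averaging over repetitions and taking expectations over $\mu$, concluding $u_\K(\mu) = -(u_1(\mu) + u_2(\mu))$. Finally I would plug in the bound from Lemma~\ref{lem:utility}.

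There is essentially no obstacle here; the only thing to be slightly careful about is making sure the identity holds pointwise at the level of individual gadget realizations (so that it survives taking expectations over the randomness of $\mu$ and of Nature), rather than only in expectation — but this is immediate from reading off Eq.~\eqref{eq:utility1} and Eq.~\eqref{eq:utility2}, since in each case exactly one of $u_1, u_2$ is zero and the other cancels with $u_\K$. If one wanted to avoid relying on the exact cancellation, one could alternatively argue via a deviation strategy for Kibitzer that mimics playing the ``honest'' action $\wt a_i = a_i$, forcing Kibitzer's payoff to $0$ against any fixed opponent behavior; but the algebraic identity is cleaner and tighter, so that is the route I would take.
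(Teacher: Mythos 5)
Your proposal is correct and is essentially identical to the paper's own proof: both use the fact that the gadget utilities sum to zero pointwise (from Eqs.~\eqref{eq:utility1}--\eqref{eq:utility2}), so $u_{\K}(\mu) = -u_1(\mu) - u_2(\mu) \leq 10\eps$ by Lemma~\ref{lem:utility}. The extra care you take in verifying the pointwise identity before averaging is a fine elaboration of the same one-line argument.
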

\begin{proof}
The utility of three players sums to $0$. By Lemma \ref{lem:utility}, we have that 
\[
u_\K(\mu) = -u_{1}(\mu) - u_2(\mu) \leq 10\eps.\qedhere
\]
\end{proof}

\subsection{Reduction}
The reduction is formally presented at Algorithm \ref{algo:reduction}, using some of the following notation.
Let $O_{-\K} = A_1\times A_2 \times (\Theta_{1}\cup \Theta_2)$ be the outcome space of player $1,2$ and the chance player.
For any $t \in [T]$, let $x_{-\K}^{(t)} = x^{(t)}_{1}\times x^{(t)}_{2}$. 
For any gadget $n_h \in N_h$ and action $a_{\K} = (i, \theta_i, \wt{a}_{i}) \in A_{\K}$, let $x_{-\K, n_h, a_{\K}}^{(t)} \in \Delta(O_{-\K})$ be the distribution of outcomes (except player $\K$) at gadgets $n_h$, given that player $\K$ plays $a_{\K}$ and players $1,2$ use (mixed) strategy profile $x_{1, n_{1}}^{(t)}\times x_{2,n_2}^{(t)}$.

\begin{algorithm}[!htbp]
\caption{Reduction}
\label{algo:reduction}
\begin{algorithmic}[1]
\State {\bf Input}: rank-$T$ $\eps$-CCE $\mu = \frac{1}{T}\sum_{t\in [T]}  x_1^{(t)}\otimes x_{2}^{(t)} \otimes x_{\K}^{(t)}$ 
\For{$h = 1,2,\ldots, H$}\Comment{repetition $h$}
\For{$n_h \in N_{h}$} \Comment{gadget $n_h$}
\State $\sigma(n_h) = (n_1, o_1, \ldots, n_{h-1}, o_{h-1})$ \Comment{history of $n_h$}
\State Compute a distribution $q_{-\K, n_h} \in \Delta(T)$ over $[T]$
\[
q_{-\K, n_h}(t) = \frac{\exp\left(-\sum_{\nu=1}^{h-1}\log\left(\frac{1}{x_{-\K, n_v, a_{\K, \nu}}^{(t)}(o_{-\K, \nu})}\right)\right)}{\sum_{t'\in [T]}\exp\left(-\sum_{\nu=1}^{h-1}\log\left(\frac{1}{x_{-\K, n_v, a_{\K, \nu}}^{(t')}(o_{-\K, \nu})}\right)\right)} \quad \forall t \in [T]
\] 
\State Compute $p_{i, n_h} = \E_{t \sim q_{-\K, n_h}} [x_{i, n_h}^{(t)}] \in \Delta(S_i)$ for $i = 1,2$ \label{line:marginal}
\If{$(p_{1, n_h}, p_{2, n_h})$ is an (every-type-$16\eps$)-BNE} \label{line:BNE}
\State \Return $(p_{1, n_h}, p_{2, n_h})$
\Else
\State Choose $a_{\K, n_{h}}^{\dagger} = (i, \theta_i, \wt{a}_{i})\in A_{\K}$ such that 
\[
\E_{\theta_{i'} \sim \rho | \theta_{i}}\E_{s_i \sim p_{i, n_h}}\E_{s_{i'}\sim p_{i',n_h}}[u_{i}(\theta_{i}, \theta_{i'}; \wt{a}_i, s_{i'}(\theta_{i'})) - u_{i}(\theta_i, \theta_{i'}; s_{i}(\theta_i), s_{i'}(\theta_{i'}))] \geq 16\eps.
\]\label{line:not-well-support}
\EndIf
\EndFor
\EndFor
\end{algorithmic}
\end{algorithm}

For any node $n_h\in N_h$, $a_{\K} = (i, \theta_i, \wt{a}_{i}) \in A_{\K}$, define $p_{-\K, n_h, a_{\K}} \in \Delta(\Theta_{i'} \times A_1\times A_2)$ 
such that 
\[
p_{-\K, n_h, a_{\K}}(\theta_{i'}, a_{i}, a_{i'}) = \Pr_{\hat{\theta}_{i'} \sim \rho|\theta_i}[\hat{\theta}_{i'} = \theta_{i'}]\cdot \Pr_{s_i \sim p_{i, n_h}}[a_i = s_i(\theta_{i})]\cdot \Pr_{s_{i'}\sim p_{i', n_h}}[a_{i'} = s_{i'}(\theta_{i'})]
\]
That is, it first samples $\theta_{i'} \sim \rho|\theta_{i}$, then it independently samples strategies $s_{i}, s_{i'}$  from $p_{i, n_h}, p_{i', n_h}$ (respectively), and finally takes $a_i = s_i(\theta_i), a_{i'} = s_{i'}(\theta_{i'})$. 
For $i \in [2]$, $\theta_i\in \Theta_i$, we further define $p_{i, n_h, \theta_i} \in \Delta(A_i)$ such that $p_{i, n_h, \theta_i}(a_i) = \E_{s_i\sim p_{i, n_h}}[s_i(\theta_i)]$.

Our goal is to prove, if Algorithm~\ref{algo:reduction} doesn't find an approximate BNE, then Kibitzer has a good deviation strategy $x_{\K}^{\dagger}$.
First, we prove Kibitzer approximately learns the outcome distribution.
\begin{lemma}
\label{lem:k-tv}
Suppose Algorithm~\ref{algo:reduction} does not find an (every-type-$16\eps$)-BNE, i.e.~the if-statement in Line~\ref{line:BNE} is always false,
For every $t \in [T]$, we have
\begin{align*}
\frac{1}{H}\sum_{h\in [H]} \E_{x_\K^{\dagger} \times x_{-\K}^{(t)}} \left[  \TV(p_{-\K, n_h, a_{\K, h}^{\dagger}}; x_{-\K, n_h, a_{\K, h}^{\dagger}}^{(t)}) \right] \leq 2\sqrt{\frac{\log(T)}{H}}.
\end{align*}
Here $n_{1}, \ldots, n_{H}$ is the (random) sequence of gadgets visited under $x_{\K}^{\dagger}\times x_{-\K}^{\dagger}$ and $a_{\K, h}^{\dagger} = a_{\K, n_h}^{\dagger}$ is the action of player $\K$ at gadget $n_h$.
\end{lemma}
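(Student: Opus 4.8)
The plan is to cast Kibitzer's deviation $x_\K^\dagger$ as an instance of online density estimation exactly as in the proof of Lemma~\ref{lem:two-tv}, but now the "expert family'' is indexed by $t \in [T]$ and, at each gadget, the relevant outcome distribution depends on both $t$ and the action $a_{\K,h}^\dagger$ that Kibitzer plays there (which is itself determined by the history through Line~\ref{line:not-well-support}). Concretely, $x_\K^\dagger$ is the strategy that, at every gadget $n_h$, first runs Vovk's aggregating rule over the $T$ experts $\{x_{-\K, n_h, \cdot}^{(t)}\}_{t \in [T]}$ using the log-losses accumulated along the realized history $\sigma(n_h)$ (this is the quantity $q_{-\K,n_h}$ in Algorithm~\ref{algo:reduction}), then plays the action $a_{\K,n_h}^\dagger = (i,\theta_i,\wt a_i)$ guaranteed by Line~\ref{line:not-well-support} — which is well-defined precisely because we assumed the if-statement in Line~\ref{line:BNE} never fires, so $(p_{1,n_h},p_{2,n_h})$ is never an (every-type-$16\eps$)-BNE and hence some witnessing $a_\K$ with advantage $\geq 16\eps$ exists.

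First I would set up the online density estimation game: the context set is $\bigcup_{h} N_h$, the outcome space is $O_{-\K}$, and the learner's prediction at round $h$ is $p_{-\K, n_h, a_{\K,h}^\dagger} = \E_{t \sim q_{-\K,n_h}}[x_{-\K, n_h, a_{\K,h}^\dagger}^{(t)}]$. The subtlety compared to Lemma~\ref{lem:two-tv} is that the expert predictions here are the \emph{action-conditioned} distributions $x_{-\K, n_h, a_{\K,h}^\dagger}^{(t)}$ rather than a fixed per-gadget distribution; but since Kibitzer commits to $a_{\K,h}^\dagger$ before players $1,2$ act (Kibitzer moves first in the gadget), conditioning on $a_{\K,h}^\dagger$ is legitimate and each $x_{-\K, n_h, a_{\K}}^{(t)}$ is a fixed function of the context-plus-Kibitzer-action, i.e. a valid "expert'' in Vovk's sense once we treat the pair $(n_h, a_{\K,h}^\dagger)$ as the effective context. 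When Nature/the environment uses the product strategy $x_{-\K}^{(t)}$ for a fixed $t$, the outcome $o_{-\K,h}$ is drawn exactly from $x_{-\K, n_h, a_{\K,h}^\dagger}^{(t)}$, so the realizability hypothesis of Lemma~\ref{lem:vovk} holds with $i^* = t$. Applying Lemma~\ref{lem:vovk} gives $\frac1H \sum_h \E[\TV(p_{-\K,n_h,a_{\K,h}^\dagger}; x_{-\K,n_h,a_{\K,h}^\dagger}^{(t)})] \leq \sqrt{\log(T)/H}$, which is even better than the claimed bound of $2\sqrt{\log(T)/H}$ — so there is slack, presumably absorbed to accommodate the mismatch between the $q_{-\K,n_h}$ used to \emph{select} $a_{\K,h}^\dagger$ (which aggregates losses over predictions at the chosen actions along the history) and the loss actually incurred; I would double-check whether a factor of $2$ is genuinely needed when the history actions feeding Vovk's rule are themselves history-dependent, or whether it is just a safety margin.

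The main obstacle I anticipate is \emph{not} the regret bound itself — that is a direct invocation of Lemma~\ref{lem:vovk} — but carefully verifying that $x_\K^\dagger$ is a well-defined behaviour strategy and that the online-estimation reduction is faithful. Two points need care: (i) $a_{\K,n_h}^\dagger$ from Line~\ref{line:not-well-support} is a deterministic function of $\sigma(n_h)$ (via $p_{1,n_h}, p_{2,n_h}$, which depend only on the realized history through $q_{-\K,n_h}$), so $x_\K^\dagger$ is a legal deterministic behaviour strategy at each information set, and its existence uses exactly the hypothesis that Line~\ref{line:BNE} fails at every $n_h$ visited; and (ii) the sequence of contexts $n_1, \ldots, n_H$ visited under $x_\K^\dagger \times x_{-\K}^{(t)}$ is generated consistently — $n_{h+1}$ is the child of $n_h$ determined by the realized outcome $o_h = (a_{1,h}, a_{2,h}, a_{\K,h}^\dagger, \theta_{1,h}, \theta_{2,h})$, where $a_{\K,h}^\dagger$ is Kibitzer's (deterministic) move and the rest is drawn from $x_{-\K,n_h,a_{\K,h}^\dagger}^{(t)}$ — so the adaptive-context version of Vovk's lemma applies unchanged. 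Once these bookkeeping items are pinned down, the inequality follows immediately.
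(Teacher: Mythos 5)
There is a genuine gap, and it is exactly at the point you flagged but then waved off. You identify the learner's prediction with $p_{-\K,n_h,a_{\K,h}^\dagger} = \E_{t\sim q_{-\K,n_h}}[x_{-\K,n_h,a_{\K,h}^\dagger}^{(t)}]$, but that is not what $p_{-\K,n_h,a_\K}$ is in the reduction. Algorithm~\ref{algo:reduction} (Line~\ref{line:marginal}) takes the \emph{per-player marginals} $p_{i,n_h}=\E_{t\sim q_{-\K,n_h}}[x_{i,n_h}^{(t)}]$ for $i=1,2$ separately, and $p_{-\K,n_h,a_\K}$ is defined as the distribution that samples $\theta_{i'}\sim\rho|\theta_i$ and then draws $a_1,a_2$ \emph{independently} from $p_{1,n_h},p_{2,n_h}$. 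Conditioned on the types this is the product $p_{1,n_h,\theta_1}\times p_{2,n_h,\theta_2}$, whereas the quantity you plug into Vovk's lemma, $\E_{t\sim q}[x_{1,n_h,\theta_1}^{(t)}\times x_{2,n_h,\theta_2}^{(t)}]$, is a mixture of products and is in general a \emph{correlated} distribution over $A_1\times A_2$. These two objects share marginals but are not equal. This distinction is not cosmetic: the product form is forced by the semantics of the reduction, since the BNE test in Line~\ref{line:BNE} and the $16\eps$-advantage of $a_{\K,n_h}^\dagger$ in Line~\ref{line:not-well-support} are stated with respect to the independent pair $(p_{1,n_h},p_{2,n_h})$, and Lemma~\ref{lem:reduction} later lower-bounds Kibitzer's payoff against $p_{-\K,n_h,a_{\K,h}^\dagger}$, so the TV guarantee must hold for $p$, not for the mixture.

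Consequently your invocation of Lemma~\ref{lem:vovk} proves the bound $\sqrt{\log(T)/H}$ only for the mixture (call it $w_{-\K,n_h,a_{\K,h}^\dagger}$, as the paper does), and the lemma as stated does not yet follow. The missing step is to transfer the guarantee from $w$ to $p$: condition on $\theta_{i'}$ (the chance player's draw is common to all three distributions, which is where Fact~\ref{fact:product1} is used), note that the target $x_{-\K,n_h,a_{\K,h}^\dagger}^{(t)}|\theta_{i'}$ is itself a product distribution, and that $p|\theta_{i'}$ is the product of the marginals of $w|\theta_{i'}$; then Fact~\ref{fact:product2} gives $\TV(p|\theta_{i'};x^{(t)}|\theta_{i'})\le 2\,\TV(w|\theta_{i'};x^{(t)}|\theta_{i'})$. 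This is precisely where the factor $2$ in the statement comes from -- it is not slack or a safety margin, and your "even better" bound of $\sqrt{\log(T)/H}$ is for the wrong object. The rest of your setup (treating $(n_h,a_{\K,h}^\dagger)$ as the effective context, realizability under $x_{-\K}^{(t)}$ with $i^*=t$, and the well-definedness of $a_{\K,n_h}^\dagger$ from the failure of Line~\ref{line:BNE}) matches the paper's argument and is fine.
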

\begin{proof}
For any gadget $n_h \in N_h$, define 
\[
w_{-\K, n_h, a^{\dagger}_{\K, h}} = \E_{t' \sim q_{-\K, n_h}} \left[x_{-\K, n_h, a_{\K, h}^{\dagger}}^{(t')}\right] \in \Delta(O_{-\K}).
\]
We first bound the (expected) TV distance between $w_{-\K, n_h, a_{\K, h}^{\dagger}}$ and $x_{-\K, n_h, a_{\K, h}^{\dagger}}^{(t)}$.
Consider the following online density estimation procedure. The outcome space is $O_{-\K}$, the set of experts are $T$ indices $\{x_{-\K}^{(t)}\}_{t\in [T]}$ , and the context comes from $\cup_{h \in [H]}N_h\times A_\K$.
At round $h\in [H]$, 
\begin{itemize}
\item Nature provides the context $(n_h, a_{\K, h}^{\dagger})$;
\item The algorithm predicts $w_{-\K, n_h, a_{\K, h}^{\dagger}} \in \Delta(O_{-\K})$; 
\item Nature samples the outcome $o_{-\K, h} \sim x_{-\K, n_h, a_{\K, h}^{\dagger}}^{(t)}$;
\item The next context $n_{h+1} = n_h | (a_{\K,h}^{\dagger}, o_{-\K, h})$ and $a_{\K, h+1}^{\dagger} = a_{\K,n_{h+1}}^{\dagger}$
\end{itemize}
By the guarantee of Lemma \ref{lem:vovk}, we have
\begin{align}
\frac{1}{H}\sum_{h\in [H]} \E_{x_\K^{\dagger} \times x_{-\K}^{(t)}} \left[  \TV(w_{-\K, n_h, a_{\K, h}^{\dagger}}; x_{-\K, n_h, a_{\K, h}^{\dagger}}^{(t)} )\right] \leq \sqrt{\frac{\log(T)}{H}}.\label{eq:k-tv-density}
\end{align}

Now consider the three distributions $p_{-\K, n_h, a_{\K, h}^{\dagger}}, w_{-\K, n_h, a_{\K, h}^{\dagger}}, x_{-\K, n_h, a_{\K, h}^{\dagger}}^{(t)} \in \Delta(\Theta_{i'}\times A_1\times A_2)$.
Intuitively, $x_{-\K, n_h, a_{\K, h}^{\dagger}}^{(t)}$ is the outcome under the product distribution $x_{-i}^{(t)}$;  $w_{-\K, n_h, a_{\K, h}^{\dagger}}$ is the predicted outcome under Vovk's algorithm; $p_{-\K, n_h, a_{\K, h}^{\dagger}}$ is the outcome after taking marginal of $w_{-\K, n_h, a_{\K, h}^{\dagger}}$.
Formally, these three distributions have the same marginal distribution on $\Theta_{i'}$ due to the independence of chance player.  Conditioning on the first coordinates being $\theta_{i'} \in \Theta_{i'}$
\begin{itemize}
\item $x_{-\K, n_h, a_{\K, h}^{\dagger}}^{(t)} | \theta_{i'} = x_{1, n_h, \theta_1}^{(t)} \times x_{2, n_h, \theta_1}^{(t)} \in \Delta(A_1)\times \Delta(A_2)$ is a product distribution;  
\item $p_{-\K, n_h, a_{\K, h}^{\dagger}} | \theta_{i'} = p_{1, n_h, \theta_1} \times p_{2, n_h, \theta_2} \in \Delta(A_1)\times \Delta(A_2)$ is a product distribution; while
\item $w_{-\K, n_h, a_{\K, h}^{\dagger}}| \theta_{i'}  = \E_{t\sim q_{-\K, n_h}} [x_{1, n_h, \theta_1}^{(t)} \times x_{2, n_h, \theta_2}^{(t)}] \in \Delta(A_1\times A_2)$ is a joint distribution over $A_1\times A_2$
\item $p_{-K, n_h, a_{K, h}^{\dagger}} | \theta_{i'}$ and $w_{-K, n_h, a_{K, h}^{\dagger}} | \theta_{i'}$ have the same marginal distribution over $a_1 \in A_1$ (resp. $a_2 \in A_2$), because $p_{1, n_h, \theta_1} = \E_{t\sim q_{-\K, n_h}} [x_{1, n_h, \theta_1}^{(t)}]$ and $p_{2, n_h, \theta_2} = \E_{t\sim q_{-\K, n_h}} [x_{2, n_h, \theta_2}^{(t)}]$  (see Line \ref{line:marginal} of Algorithm \ref{algo:reduction})
\end{itemize}



Then we have
\begin{align*}
\TV(p_{-\K, n_h, a_{\K, h}^{\dagger}}; x_{-\K, n_h, a^{\dagger}_{\K, h}}^{(t)}) = &~  \E_{\theta_{i'} \sim \rho | \theta_i} \left[\TV\left(p_{1, n_h, \theta_1} \times p_{2, n_h, \theta_2}\;\; ; \;\; x_{1, n_h, \theta_1}^{(t)}\times x_{2, n_h, \theta_2}^{(t)}\right)\right]\\
\leq &~ 2 \E_{\theta_{i'} \sim \rho|\theta_i} \left[\TV\left(w_{-\K,n_h, a_{\K, h}^{\dagger}}|\theta_{i'}\;\; ; \;\; x_{1, n_h, \theta_1}^{(t)}\times x_{2, n_h, \theta_2}^{(t)}\right)\right]\\
= &~ 2 \TV\left(w_{-\K,n_h, a_{\K, h}^{\dagger}}\;\; ; \;\; x_{-\K, n_h, a_{\K, h}^{\dagger}}^{(t)}\right).
\end{align*}
Here the first step follows from Fact \ref{fact:product1}, the second follows from Fact \ref{fact:product2} and the last step follows from Fact \ref{fact:product1}.

Taking an expectation and combining with Eq.~\eqref{eq:k-tv-density}, we have
\begin{align*}
&~\frac{1}{H}\sum_{h\in [H]} \E_{ x_\K^{\dagger} \times x_{-\K}^{(t)}} \left[  \TV(p_{-\K, n_h, a_{\K, h}^{\dagger}}; x_{-\K, n_h, a_{\K, h}^{\dagger}}^{(t)}) \right]\\ \leq &~ \frac{2}{H}\sum_{h\in [H]} \E_{x_\K^{\dagger} \times x_{-\K}^{(t)}} \left[  \TV(w_{-\K, n_h, a_{\K, h}^{\dagger}}; x_{-\K, n_h, a_{\K, h}^{\dagger}}^{(t)}) \right] \leq 2\sqrt{\frac{\log(T)}{H}}.
\end{align*}
\end{proof}

Now we can prove the correctness of the reduction.
\begin{lemma}
\label{lem:reduction}
Algorithm \ref{algo:reduction} returns an (every-type-$16\eps$)-BNE.
\end{lemma}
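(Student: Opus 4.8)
The plan is to argue by contradiction: suppose Algorithm~\ref{algo:reduction} never returns, so the if-statement at Line~\ref{line:BNE} is false at every gadget $n_h$ and hence, for each $n_h$, a valid Kibitzer deviation action $a_{\K, n_h}^\dagger = (i, \theta_i, \wt a_i)$ witnessing the $16\eps$-improvement at Line~\ref{line:not-well-support} exists. Define the Kibitzer deviation strategy $x_\K^\dagger \in \mS_\K$ that plays $a_{\K, n_h}^\dagger$ at each gadget $n_h$. The goal is to show $u_\K(x_\K^\dagger, \mu_{-\K})$ is substantially larger than $10\eps$, contradicting Lemma~\ref{lem:eq-utility-k1}. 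So the whole reduction comes down to lower-bounding Kibitzer's payoff under this deviation.

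First I would fix a product component $t \in [T]$ and a gadget $n_h$, and expand Kibitzer's per-gadget payoff $u_{\K, n_h}(x_{\K, n_h}^\dagger, x_{-\K, n_h}^{(t)})$ by inserting and subtracting the ``predicted'' payoff computed against $p_{-\K, n_h, a_{\K,h}^\dagger}$ — exactly the telescoping trick used in Lemma~\ref{lem:utility}. Concretely: Kibitzer's payoff against the true product outcome $x_{-\K, n_h, a_{\K,h}^\dagger}^{(t)}$ differs from its payoff against the predicted outcome $p_{-\K, n_h, a_{\K,h}^\dagger}$ by at most $\frac{2}{H}\TV(p_{-\K, n_h, a_{\K,h}^\dagger}; x_{-\K, n_h, a_{\K,h}^\dagger}^{(t)})$, since $u_{\K, n_h}$ takes values in $[-1/H, 1/H]$. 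Against the predicted outcome, by construction of $a_{\K, n_h}^\dagger$ at Line~\ref{line:not-well-support} and the definitions of $p_{-\K, n_h, a_{\K}}$ and $p_{i, n_h, \theta_i}$, Kibitzer's payoff is at least $\frac{16\eps}{H}$ (the $16\eps$-improvement, divided by $H$ because of the averaging over repetitions). Summing over $h$ and taking expectations over the gadget path induced by $x_\K^\dagger \times x_{-\K}^{(t)}$ gives
\[
u_\K(x_\K^\dagger, x_{-\K}^{(t)}) \geq 16\eps - \frac{2}{H}\sum_{h\in[H]} \E_{x_\K^\dagger \times x_{-\K}^{(t)}}\left[\TV(p_{-\K, n_h, a_{\K,h}^\dagger}; x_{-\K, n_h, a_{\K,h}^\dagger}^{(t)})\right].
\]
Applying Lemma~\ref{lem:k-tv}, the error term is at most $4\sqrt{\log(T)/H} \leq 4\eps$ by the choice $H \geq \log(T)/\eps^2$, so $u_\K(x_\K^\dagger, x_{-\K}^{(t)}) \geq 12\eps$. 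Averaging over $t \sim [T]$ yields $u_\K(x_\K^\dagger, \mu_{-\K}) \geq 12\eps$, which contradicts Lemma~\ref{lem:eq-utility-k1} (stating $u_\K(\mu) \leq 10\eps$) together with the $\eps$-CCE property $u_\K(\mu) \geq u_\K(x_\K^\dagger, \mu_{-\K}) - \eps$; indeed $12\eps - \eps = 11\eps > 10\eps$. Hence Algorithm~\ref{algo:reduction} must return, and when it does so at Line~\ref{line:BNE} it returns an (every-type-$16\eps$)-BNE by the explicit check, which is what we want.

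The step I expect to be the main obstacle — and the one deserving the most care — is the passage from ``payoff against $p_{-\K, n_h, a_{\K,h}^\dagger}$'' to the $16\eps$ bound: one must verify that the quantity appearing in Line~\ref{line:not-well-support}, namely $\E_{\theta_{i'}\sim\rho|\theta_i}\E_{s_i\sim p_{i,n_h}}\E_{s_{i'}\sim p_{i',n_h}}[u_i(\theta_i,\theta_{i'};\wt a_i, s_{i'}(\theta_{i'})) - u_i(\theta_i,\theta_{i'}; s_i(\theta_i), s_{i'}(\theta_{i'}))]$, is \emph{exactly} (up to the $1/H$ normalization) Kibitzer's gadget payoff against the distribution $p_{-\K, n_h, a_{\K,h}^\dagger}$, using the gadget utility definition in Eq.~\eqref{eq:utility1}/\eqref{eq:utility2}. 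This requires matching the independence structure: in $p_{-\K, n_h, a_{\K}}|\theta_{i'}$ the actions $a_i, a_{i'}$ are drawn independently from the marginals $p_{i,n_h,\theta_i}, p_{i',n_h,\theta_{i'}}$, which is consistent with Kibitzer scoring $\wt a_i$ against $(a_i, a_{i'})$ exactly as in the one-shot Bayesian deviation. A secondary subtlety is that the existence of a valid $a_{\K,n_h}^\dagger$ at Line~\ref{line:not-well-support} follows precisely because $(p_{1,n_h}, p_{2,n_h})$ fails the (every-type-$16\eps$)-BNE test, so I would spell out that a violation of the every-type-$16\eps$-BNE condition for the mixed-strategy profile $(p_{1,n_h}, p_{2,n_h})$ — i.e., some player $i$, some type $\theta_i$, and some pure-strategy swap — yields, after restricting the swap to the action $\wt a_i$ played at type $\theta_i$, an improving single-action deviation of value $\geq 16\eps$, which is exactly the witness encoded by $a_{\K,n_h}^\dagger = (i,\theta_i,\wt a_i)$. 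The rest is the bookkeeping already rehearsed in Lemma~\ref{lem:utility}.
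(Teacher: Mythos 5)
Your proposal is correct and follows essentially the same argument as the paper's proof: contradiction via the Kibitzer deviation $x_\K^\dagger$, the telescoping comparison against the predicted outcome $p_{-\K,n_h,a_{\K,h}^\dagger}$ with the $\frac{2}{H}\TV$ error bound, the $\frac{16\eps}{H}$ per-gadget gain from Line~\ref{line:not-well-support}, Lemma~\ref{lem:k-tv} to control the accumulated TV error, and the final contradiction $11\eps > 10\eps$ with Lemma~\ref{lem:eq-utility-k1}. You additionally spell out two details the paper leaves implicit (that the Line~\ref{line:not-well-support} quantity equals Kibitzer's payoff against $p_{-\K,n_h,a_{\K}}$ via the product structure of that distribution, and that a failed BNE check yields the witness $a_{\K,n_h}^\dagger$), which is fine and consistent with the construction.
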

\begin{proof}
We prove by contradiction and assume Algorithm \ref{algo:reduction} never returns. We compute the utility under deviation $x_{\K}^{\dagger}$. For any product distribution $t \in [T]$, we have
\begin{align*}
u_{\K}(x_\K^{\dagger}, x_{-\K}^{(t)}) = &~ \sum_{h=1}^{H} \E_{x_\K^{\dagger}\times x_{-\K}^{(t)}} \left[u_{\K, n_h}(a_{\K, h}^\dagger, x_{-\K, n_h, a_{\K, h}^{\dagger}}^{(t)})\right] \\
\geq &~ \sum_{h=1}^{H} \E_{x_\K^{\dagger}\times x_{-\K}^{(t)}} \left[-\frac{2}{H} \TV(p_{-\K, n_h, a_{\K, h}^{\dagger}}; x_{-\K, n_h, a_{\K, h}^{\dagger}}^{(t)}) + u_{\K, n_h}(a_{\K, h}^\dagger, p_{-\K, n_h, a_{\K, h}^{\dagger}}^{(t)})\right] \\
\geq &~ \sum_{h=1}^{H} \E_{x_\K^{\dagger}\times x_{-\K}^{(t)}} \left[-\frac{2}{H} \TV(p_{-\K, n_h, a_{\K, h}^{\dagger}}; x_{-\K, n_h, a_{\K, h}^{\dagger}}^{(t)}) + \frac{16\eps}{H}\right] \\
\geq &~ -4\sqrt{\frac{\log(T)}{H}} + 16\eps \geq 12\eps. 
\end{align*}
The second step holds since $u_{K, n_h}$ is between $[-\frac{1}{H}, \frac{1}{H}]$, the third step follows from the choice $a_{\K, h}^{\dagger}$ (see Line \ref{line:not-well-support} of Algorithm \ref{algo:reduction}). The fourth step follows from Lemma \ref{lem:k-tv}. 

Taking an expectation over $t \sim [T]$, we have
\begin{align*}
u_{\K}(x_\K^{\dagger}, \mu_{-\K}) = \frac{1}{T}\sum_{t\in [T]}u_{i}(x_\K^{\dagger}, x_{-\K}^{(t)}) \geq 12\eps.
\end{align*}
Note $\mu$ is an $\eps$-CCE, we have $u_\K(\mu) \geq u_{\K}(x_\K^{\dagger}, \mu_{-\K}) -\eps \geq 11\eps$. This contradicts Lemma \ref{lem:eq-utility-k1}, hence, we have proved Algorithm \ref{algo:reduction} must return an (every-type-$16\eps$)-BNE at some step.
\end{proof}


Now we can finish the proof of Theorem \ref{thm:lb-extensive}.
\begin{proof}[Proof of Theorem \ref{thm:lb-extensive}]
Let $\delta > 0$ be the absolute constant in Theorem \ref{thm:lb-bayesian}.
We prove by contradiction and assume that there exists a constant $\alpha > 0$, such that one can find an rank-$2^{\log_2^{1 - \alpha}(|\mI|)}$ $\frac{\delta}{16}$-CCE of a three-player extensive-form game with runtime polynomial in $|\mI|$.
Given a two-player, $n$ types, two-action Bayesian game $\mathcal{G}$, we construct the extensive-form game $\Gamma$ with $H = (\frac{\log_2(n)}{\delta^2})^{1/\alpha}$ repetitions. 
The number of information sets at each gadget is $O(n^2)$, and therefore, the total number of information sets $|\mI| = n^{O(H)}$. Moreover, the number of repetition $H$ satisfies
\[
\frac{\log(2^{\log_2^{1 - \alpha}(|\mI|)})}{(\delta/16)^2} = O(\delta^{-2} \log_2^{1-\alpha}(|\mI|)) =O(H^{1-\alpha}\log_2^{1-\alpha}(n)\delta^{-2}) \leq H.
\]
Hence, by Lemma \ref{lem:reduction}, one can find an (every-type-$\delta$)-BNE in time $
\poly(|\mI|) = n^{O(H)} = n^{\polylog(n)}$, which, by Theorem~\ref{thm:lb-bayesian} contradicts the assumption that $\PPAD \not\subset \TIME(n^{\polylog(n)})$.  
\end{proof}

 \section{Efficient algorithm for Bayesian game}
\label{sec:bayesian}

Our goal is to prove 
\begin{theorem}[Formal statement of Theorem \ref{thm:bayesian-regret-main}]
\label{thm:bayesian-regret}
Let $\eps > 0$, for any $m$-player, $n$-action, $K$-type Bayesian game, there exists uncoupled dynamics that converge to the set of (every-type-$\eps$)-NFCE in $(\log(n)/\eps^2)^{O(1/\eps)}$ iterations. The computation cost at each iteration is $O(mnK\log(mnK/\eps)/\eps^3)$.
\end{theorem}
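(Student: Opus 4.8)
The plan is to give uncoupled dynamics in which every player runs a no-swap-regret algorithm \emph{at every type}, and then invoke the standard connection between swap regret and correlated equilibria (applied type-by-type) to conclude that the empirical distribution of play is an (every-type-$\eps$)-NFCE. The per-player algorithm will be the swap-to-external reduction of~\cite{peng2023fast,dagan2023external} instantiated with a new base external-regret minimizer whose regret bound is independent of the number of types $K$; essentially all of the novelty is in this base algorithm.

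\emph{Step 1: a $K$-independent, per-type external-regret minimizer.} Fix a player $i$. For each type $k \in \Theta_i$ I would run an independent copy of MWU (Algorithm~\ref{algo:mwu}) over the $n$ actions of $A_i$, where the reward vector at round $t$ is $r_{i,k}^{(t)} \in [0,1]^{A_i}$ defined by $r_{i,k}^{(t)}(a) = \E_{\theta_{-i}\sim(\rho\mid\theta_i=k)}\,\E_{s_{-i}\sim p_{-i}^{(t)}}[u_i((k,\theta_{-i}); a, s_{-i}(\theta_{-i}))]$, with $p_{-i}^{(t)}$ the other players' current strategies; player $i$'s round-$t$ strategy is the product $p_i^{(t)} = \bigotimes_{k} x_{i,k}^{(t)}$ of the per-type MWU distributions. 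By Lemma~\ref{lem:mwu}, for \emph{every} type $k$ simultaneously the external regret of the $k$-th copy is at most $2\sqrt{H\log n}$ after $H$ rounds. The crucial point --- the ``appearance probability is fixed'' observation --- is that $\rho(\theta_i = k)$ never changes, so forming the product over the other types does not alter the reward sequence that the type-$k$ copy sees; hence each copy only ever competes with the $n$ actions of a single type, and the bound carries no dependence on $K$. Equivalently, this is ordinary MWU over $S_i = A_i^{\Theta_i}$ but with a separate (well-chosen) learning rate per type; with a common learning rate one would pay $\log |S_i| = K \log n$, and it is exactly the per-type rates that remove the $K$.

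\emph{Step 2: swap-to-external, carried out per type.} I would then use the Step-1 algorithm $\mathcal{E}_i$ over $S_i$ as the base for the reduction of~\cite{peng2023fast,dagan2023external}, which converts a base with $\eps_0$-external regret after $H$ rounds into a dynamics with $\eps$-swap regret after $H^{O(1/\eps)}$ rounds whenever $\eps_0 = \mathrm{poly}(\eps)$. Taking $H = \Theta(\log n /\eps^2)$ (so that $2\sqrt{\log n / H}$ is a small polynomial in $\eps$) gives $T = H^{O(1/\eps)} = (\log n/\eps^2)^{O(1/\eps)}$ iterations. This already yields an (ex-ante-$\eps$)-NFCE, matching Definition~\ref{def:eq}'s first notion. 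To reach the every-type notion, I would re-run the regret analysis of~\cite{peng2023fast,dagan2023external} coordinate-by-coordinate, one coordinate per type: the type-$k$ coordinate of the base $\mathcal{E}_i$ is itself an external-regret minimizer over $A_i$ whose regret against $r_{i,k}^{(t)}$ is $O(\sqrt{H \log n})$ \emph{uniformly in $k$}, and the reduction's blockwise, linear bookkeeping then produces, for every type $k$, a bound $\max_{\phi: S_i \to S_i} \frac{1}{T}\sum_t \E_{s \sim p_i^{(t)}}[\,r_{i,k}^{(t)}(\phi(s)(k)) - r_{i,k}^{(t)}(s(k))\,] \le \eps$. Summing these per-type guarantees against $\rho$ for every player and time-averaging shows the empirical joint distribution $\mu = \frac{1}{T}\sum_t \bigotimes_i p_i^{(t)}$ is an (every-type-$\eps$)-NFCE. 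For the per-iteration cost: each of the $m$ players maintains, for each of its $K$ types, an $O(1/\eps)$-level stack of MWU instances over $n$ actions, and refreshes the reward vectors $r_{i,k}^{(\cdot)}$ to accuracy $\mathrm{poly}(\eps)$ by drawing $O(\eps^{-2}\log(mnK/\eps))$ samples of the other players' joint type--action profile; tallying these gives the stated $O(mnK\log(mnK/\eps)/\eps^3)$ bound.

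\emph{Main obstacle.} Step 1 is essentially an observation. The real work is the per-type version of Step 2, i.e.\ upgrading from ex-ante to every-type approximation. The subtlety is that a deviation $\phi: S_i \to S_i$ evaluated at type $k$ is allowed to read the recommended actions at the \emph{other} types $k' \ne k$, so it is not enough to make each type separately swap-robust while playing a product across types (a product play can let such a $\phi$ exploit cross-round correlations between $r_{i,k}^{(t)}$ and the other types' recommendations). What should save the argument is that the reduction is applied over the whole space $S_i$ --- so the realized play $p_i^{(t)} \in \Delta(S_i)$ correlates the coordinates appropriately --- while the base's external-regret bound remains available \emph{per coordinate with $K$-free constants}; verifying that the~\cite{peng2023fast,dagan2023external} regret decomposition pushes through in this type-indexed form, and that the only place $K$ could enter has indeed been removed by the fixed appearance probabilities, is the technical heart of the proof. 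A secondary but routine point is to confirm that the sampling-based estimates of $r_{i,k}^{(t)}$ degrade the swap regret by at most $\mathrm{poly}(\eps)$ with high probability over all $i,k,t$.
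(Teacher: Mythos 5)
Your proposal is correct and takes essentially the same approach as the paper: an independent per-type MWU with rewards given by the expected utility conditioned on that type (exploiting that the type marginals are fixed), a product strategy across types, the multi-scale swap-to-external reduction of \cite{peng2023fast,dagan2023external} carried out with per-type bookkeeping to get every-type guarantees, and sampling of the reward entries for the stated per-iteration cost. The obstacle you flag is resolved exactly as you anticipate: since the product strategy is constant within each block, the swapped reward at every type $k$ is dominated by the block-aggregated $\|\cdot\|_{\infty}$ term even when $\phi$ reads the recommendations at other types, and the telescoping across the $O(1/\eps)$ scales then goes through type-by-type (this is the content of Lemmas \ref{lem:external} and \ref{lem:external-swap}).
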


We have each player run the no-regret learning algorithm (shown in Algorithm \ref{algo:bayesian}) for a sequence of $T$ days. Algorithm \ref{algo:bayesian} is an adaptation of the multi-scale MWU \cite{peng2023fast} for Bayesian game.
Given the regret parameter $\eps > 0$, it sets $H = \log(n)/\eps^2$, $L = 1/\eps$ and runs for $T = H^{L}$ days. It maintains $L$ threads and plays uniformly over them.
The $L$ threads restart and update in different frequency. 
The key innovation of Algorithm \ref{algo:bayesian} is that it runs MWU for each type $k \in [K]$ independently (see Procedure $\textsc{MWU}_{\ell, k}$) and the strategy of the $\ell$-th thread is the product strategy of each type (see Line \ref{line:product}). Concretely, for each type $k \in [K]$, $\textsc{MWU}_{\ell, k}$ runs MWU and the reward vector is constructed as the expected reward conditioning on type $k$ (see Line \ref{line:reward}).

\begin{algorithm}[!htbp]
\caption{\underline{Multi-scale MWU for Bayesian game} \\
\Comment{No-regret learning algorithm for player $i \in [m]$, the opponents play $\mu_{t}^{(-i)}\in \Delta(S_{-i})$ at day $t \in [T]$ }}
\label{algo:bayesian}
\begin{algorithmic}[1]
\State {\bf Parameters} $H = \log(n)/\eps^2$, $L = 1/\eps$, $T = H^{L}$ 
\State Let $q_{t, \ell} \in (\Delta(A_i))^{\Theta_i}$ be the strategy of $\textsc{Thread}_\ell$ ($\ell \in [L]$) at day $t\in [T]$
\State Play the uniform mixture $p_{t} = \frac{1}{L} \sum_{\ell \in [L]}q_{t, \ell} \in \Delta(S_i)$\\

\Procedure{$\textsc{Thread}_\ell$}{} \Comment{Thread $\ell \in [L]$}
\State Let $w_{t, \ell, k} \in \Delta(A_i)$  be the strategy of $\textsc{MWU}_{\ell, k}$ ($k\in [K]$) at day $t\in [T]$
\State Play the product distribution $q_{t, \ell} = w_{t, \ell, 1} \times \cdots \times w_{t, \ell, K} \in (\Delta(A_i))^{\Theta_i}$ \label{line:product}
\EndProcedure\\

\Procedure{$\textsc{MWU}_{\ell, k}$}{} \Comment{Thread $\ell \in [L]$, type $k\in [K]$}
\For{$\beta = 1,2,\ldots, T/H^{\ell}$} \Comment{$\beta$-th restart, restart every $H^{\ell}$ days}
\State Initiate MWU with parameters $H, n, H^{\ell-1}$ \label{line:meta-day1}
\For{$h = 1,2,\ldots, H$}\Comment{Round $h$}
\State Let $z_{\beta, h, \ell, k}\in\Delta(A_i)$ be the strategy of MWU at the $h$-th round
\State Play $w_{t,\ell, k} = z_{\beta, h, \ell, j}$ at day $t \in [(\beta-1) H^{\ell}+ (h-1) H^{\ell-1} + 1: (\beta-1) H^{\ell}+ h H^{\ell-1}]$ 
\State Update MWU with the aggregated rewards of the past $H^{\ell-1}$ days 
\[
\left\{\sum_{t =(\beta-1) H^{\ell}+ (h-1) H^{\ell-1} + 1}^{(\beta-1) H^{\ell}+ h H^{\ell-1}} 
r_{t, k}(j)\right\}_{j \in [n]} \in [0, H^{\ell-1}]^n
\]
\State where 
\[
r_{t, k}(j) = \E_{s_{-i} \sim \mu_{t}^{(-i)}}\E_{\theta_{-i} \sim (\rho | \theta_i = k)}[u_i(\theta; j, s_{-i}(\theta_{-i}))] \quad \forall j\in [n]
\]\Comment{$\mu_{t}^{(-i)} \in \Delta(S_{-i})$} \label{line:reward}
\EndFor 
\EndFor
\EndProcedure
\end{algorithmic}
\end{algorithm}

We first bound the external regret for each thread $\ell \in [L]$ and each type $k\in [K]$.
\begin{lemma}[External regret]
\label{lem:external}
For each thread $\ell \in [L]$,  restart $\beta \in [T/H^{\ell}]$, type $k \in [K]$ and action $j \in [n]$, we have
\begin{align*}
\sum_{t = (\beta-1)H^{\ell}+1}^{\beta H^{\ell}}r_{t, k}(j) - \sum_{t = (\beta-1)H^{\ell}+1}^{\beta H^{\ell}}\sum_{s_i \in S_i}q_{t, \ell}(s_i) r_{t, k}(s_i(k))   \leq 2\eps H^{\ell}.
\end{align*}
\end{lemma}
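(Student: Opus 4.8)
## Proof Proposal for Lemma~\ref{lem:external}

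The plan is to reduce the claimed bound to a direct application of the MWU regret guarantee (Lemma~\ref{lem:mwu}), applied separately to each invocation of the procedure $\textsc{MWU}_{\ell,k}$. First I would fix a thread $\ell$, a restart index $\beta$, a type $k$, and observe that during the window of $H^\ell$ days indexed by $t \in [(\beta-1)H^\ell + 1 : \beta H^\ell]$, the strategy $w_{t,\ell,k}$ is piecewise constant: it equals $z_{\beta,h,\ell,k}$ throughout each block of $H^{\ell-1}$ consecutive days, for $h = 1, \dots, H$. So the left-hand sum over $H^\ell$ days collapses into a sum over the $H$ meta-rounds, where meta-round $h$ contributes the aggregated reward vector $R_h(j) := \sum_{t} r_{t,k}(j)$ over that block, which by construction lies in $[0, H^{\ell-1}]^n$ — this is exactly the reward fed to the internal MWU instance in Line for the update.

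Next I would use the fact that $q_{t,\ell}$ is a product distribution over types, so that $\sum_{s_i \in S_i} q_{t,\ell}(s_i)\, r_{t,k}(s_i(k)) = \langle w_{t,\ell,k}, r_{t,k}\rangle$: the reward $r_{t,k}$ depends only on the action played at type $k$, and the marginal of $q_{t,\ell}$ on coordinate $k$ is precisely $w_{t,\ell,k}$ (Line~\ref{line:product}). Hence the two sums in the statement become, respectively, $\max_{j} \sum_{h=1}^H R_h(j)$ and $\sum_{h=1}^H \langle z_{\beta,h,\ell,k}, R_h\rangle$, i.e.\ the external-regret expression for the $H$-round MWU instance with payoff bound $B = H^{\ell-1}$. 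Lemma~\ref{lem:mwu} with $\eta = \sqrt{\log n / H}/H^{\ell-1}$ then bounds this by $2 H^{\ell-1}\sqrt{H \log n}$. Finally I substitute $H = \log(n)/\eps^2$: then $\sqrt{H\log n} = \log(n)/\eps = \eps H$, so the bound is $2 H^{\ell-1} \cdot \eps H = 2\eps H^\ell$, as desired.

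The only real care-point — and the closest thing to an obstacle — is the bookkeeping that the single restart $\beta$ of $\textsc{MWU}_{\ell,k}$ indeed spans exactly the day-window $[(\beta-1)H^\ell+1 : \beta H^\ell]$ and that its $H$ internal rounds partition this window into equal blocks of $H^{\ell-1}$ days; this is just a matter of reading off the index arithmetic in the inner \textbf{for} loops, but it must be stated cleanly so that the collapse of the per-day sums into per-meta-round sums is justified. Everything else is routine: the product structure of $q_{t,\ell}$ makes the ``$r_{t,k}$ depends only on the type-$k$ action'' observation immediate, and the rest is a verbatim invocation of Lemma~\ref{lem:mwu} followed by the substitution $H = \log(n)/\eps^2$.
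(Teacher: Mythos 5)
Your proposal is correct and follows essentially the same route as the paper's proof: use the product structure of $q_{t,\ell}$ to reduce the played utility to $\langle w_{t,\ell,k}, r_{t,k}\rangle$, then invoke the MWU regret bound (Lemma~\ref{lem:mwu}) for the $H$-round instance with aggregated rewards bounded by $H^{\ell-1}$, and substitute $H=\log(n)/\eps^2$. Your extra care about the block/index bookkeeping is just an explicit version of what the paper leaves implicit.
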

\begin{proof}
We have 
\begin{align*}
\sum_{t = (\beta-1)H^{\ell}+1}^{\beta H^{\ell}}\sum_{s_i \in S_i}q_{t, \ell}(s_i) r_{t, k}(s_i(k)) = &~ \sum_{t = (\beta-1)H^{\ell}+1}^{\beta H^{\ell}}\sum_{j\in [n]} \sum_{s_i \in S_i, s_i(k) = j}q_{t, \ell}(s_i) r_{t, k}(j)\\
= &~\sum_{t = (\beta-1)H^{\ell}+1}^{\beta H^{\ell}}\sum_{j\in [n]} w_{t, \ell, k}(j) r_{t, k}(j)\\
\geq &~\max_{j \in [n]}\sum_{t = (\beta-1)H^{\ell}+1}^{\beta H^{\ell}}r_{t, k}(j) - 2 \sqrt{H \log(n)} H^{\ell-1}\\
= &~\max_{j \in [n]}\sum_{t = (\beta-1)H^{\ell}+1}^{\beta H^{\ell}}r_{t, k}(j) - 2 \eps H^{\ell}.
\end{align*}
Here the second step holds since the strategy $q_{t, \ell} = w_{t, \ell, 1}\times \cdots \times w_{t, \ell, K}$ is the product distribution of $\{w_{t,\ell,k}\}_{k\in [K]}$. The third step holds due to the regret guarantee of MWU, the last step holds due to $H =\log(n)/\eps^2$. 
\end{proof}

The next step is to bound the swap regret from the external regret guarantee. 
\begin{lemma}[External to swap regret reduction]
\label{lem:external-swap}
For any type $k \in [K]$ and any swap function $\phi: S_i\rightarrow S_i$, we have
\begin{align*}
\sum_{t\in [T]}\sum_{s_i \in S_i}p_{t}(s_i) r_{t, k}(\phi(s_i)(k)) - p_{t}(s_i) r_{t, k}(s_i(k))   \leq 3\eps T.
\end{align*}
\end{lemma}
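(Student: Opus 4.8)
\textbf{Proof plan for Lemma~\ref{lem:external-swap}.}
The plan is to follow the multi-scale external-to-swap reduction of~\cite{peng2023fast,dagan2023external}, but carried out separately and uniformly for each fixed type $k\in[K]$ --- which works because, as noted in the algorithm description, the reward vectors $r_{t,k}$ are the only type-$k$ quantities that drive $\textsc{MWU}_{\ell,k}$, and the appearance probability of type $k$ is not affected by the product construction in Line~\ref{line:product}. Fix the type $k$ and the swap function $\phi\colon S_i\to S_i$ for the rest of the argument; since $r_{t,k}(s_i(k))$ depends on $s_i$ only through the action $s_i(k)\in[n]$, the swap function effectively induces a map on $[n]$, so we are reducing to the standard finite-action swap-regret statement with the sequence of reward vectors $\{r_{t,k}\}_{t\in[T]}$ and the played distributions $\{w_{t,\ell,k}\}_{\ell\in[L]}$, whose uniform mixture over $\ell$ has type-$k$ marginal equal to that of $p_t$.

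The key steps, in order. \emph{Step 1:} Rewrite the target quantity in terms of the $L$ threads: since $p_t=\frac1L\sum_\ell q_{t,\ell}$ and $q_{t,\ell}$ has type-$k$ marginal $w_{t,\ell,k}$, the left-hand side equals $\frac1L\sum_{\ell\in[L]}\sum_{t\in[T]}\big(\langle w_{t,\ell,k},r_{t,k}\circ\phi\rangle-\langle w_{t,\ell,k},r_{t,k}\rangle\big)$, so it suffices to bound this averaged-over-threads expression by $3\eps T$. \emph{Step 2 (telescoping across scales):} For any trajectory of ``recommended'' actions, decompose the gain of the swap deviation by comparing thread $\ell$ against thread $\ell+1$ (and the top thread $L$ against the constant comparator $\phi$), exactly as in the multi-scale analysis: each thread $\ell$ restarts every $H^\ell$ days and, on each of its restart blocks, Lemma~\ref{lem:external} gives external regret at most $2\eps H^\ell$ against \emph{any} fixed action, in particular against the action that the coarser thread $\ell+1$ would have committed to over that block (which is constant on each $H^\ell$-length sub-block of thread $\ell+1$'s $H^{\ell+1}$-day block, by the nesting $H^\ell\mid H^{\ell+1}$). \emph{Step 3 (summing the scales):} Thread $\ell$ has $T/H^\ell$ restarts, each contributing $\le 2\eps H^\ell$, so its total contribution is $\le 2\eps T$ per scale; the crucial point is that the telescoping collapses the $L$ inter-thread comparisons so that the \emph{total} error is $2\eps T$ times a constant independent of $L$, not $2\eps T\cdot L$ --- here one uses that an action committed at coarse scale persists long enough to be ``tracked'' at the next finer scale. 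Combined with the initialization/boundary terms (bounded using $r_{t,k}\in[0,1]$ and $L=1/\eps$, $T=H^L$, which absorbs into the slack), this yields the bound $3\eps T$ after choosing constants.

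\emph{Main obstacle.} The delicate point is Step~3: making the telescoping precise so that the regret accumulated across the $L=1/\eps$ scales does not blow up by a factor of $L$. This requires carefully defining, for each day $t$, the comparator action at each scale as the action that the next-coarser thread has ``locked in'', verifying that this comparator is constant on the blocks over which the finer thread accumulates regret (which is exactly where the divisibility $H^{\ell-1}\mid H^\ell$ and the block structure of Line~13 of Algorithm~\ref{algo:bayesian} enter), and checking that the boundary terms from thread restarts contribute only lower-order error given $r_{t,k}\in[0,1]$. A secondary, more routine check is that passing from $S_i$-valued swap functions to $[n]$-valued ones loses nothing, i.e.\ that $r_{t,k}$ genuinely factors through $s_i\mapsto s_i(k)$ and that the type-$k$ marginal of $p_t$ is correctly $\frac1L\sum_\ell w_{t,\ell,k}$; both are immediate from the construction but must be stated.
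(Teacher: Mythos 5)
Your overall strategy is the paper's: decompose over the $L$ threads, invoke Lemma~\ref{lem:external} once per restart block at each scale, and telescope across scales with a boundary term of order $T/L=\eps T$. However, Step 1 contains a genuine error. A swap function $\phi\colon S_i\to S_i$ does \emph{not} induce a map on $[n]$ for a fixed type $k$: while $r_{t,k}(s_i(k))$ depends on $s_i$ only through $s_i(k)$, the deviation value $r_{t,k}(\phi(s_i)(k))$ depends on the \emph{entire} strategy $s_i$, since the swapped action at type $k$ may be dictated by $s_i$'s actions at other types. Consequently the rewriting $\sum_{s_i}q_{t,\ell}(s_i)\,r_{t,k}(\phi(s_i)(k))=\langle w_{t,\ell,k},\,r_{t,k}\circ\phi\rangle$ is false in general (the object $r_{t,k}\circ\phi$ is not even a vector indexed by $[n]$), and there is no reduction to the ``standard finite-action swap-regret statement''; the check you dismissed as routine is precisely where this breaks. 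The paper never needs such a factorization: it only uses that $q_{t,\ell}$ is constant on each sub-block of length $H^{\ell-1}$, so that, whatever the (random, $s_i$-dependent) action $\phi(s_i)(k)$ is, the deviation reward accumulated on that sub-block is at most $\bigl\|\sum_{t\in\text{sub-block}}r_{t,k}\bigr\|_\infty$, i.e.\ the best \emph{fixed} action's total on the block.

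Relatedly, your telescoping comparator in Steps 2--3 is misidentified. What the argument needs is: (i) the thread-$\ell$ swap term is at most the sum, over length-$H^{\ell-1}$ blocks, of the best fixed action's total reward (by the constancy just described), and (ii) by Lemma~\ref{lem:external}, the thread-$\ell$ obtained reward is at least the sum, over its length-$H^{\ell}$ restart blocks, of the best fixed action's total reward minus $2\eps T$. Summing over $\ell$ and averaging, the per-block maxima cancel between consecutive scales, leaving $\frac1L\sum_{t\in[T]}\|r_{t,k}\|_\infty+2\eps T\le 3\eps T$. Comparing thread $\ell$'s play against ``the action the coarser thread $\ell+1$ committed to'' only shows that thread $\ell$'s obtained reward is at least thread $(\ell+1)$'s obtained reward minus $2\eps T$, a comparison between obtained rewards that does not control any swap term; and ``the top thread $L$ against the constant comparator $\phi$'' is not meaningful, since $\phi$ is not a fixed action. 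Once the comparator is fixed as the best action in hindsight on each block (which simultaneously dominates the swapped play induced by the next-coarser thread's frozen recommendation distribution), your plan coincides with the paper's proof; as written, though, both the Step-1 identity and the Step-2/3 chain would not go through.
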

\begin{proof}
First, we have
\begin{align}
&~ \sum_{t\in [T]}\sum_{s_i \in S_i}p_{t}(s_i) r_{t, k}(\phi(s_i)(k)) - p_{t}(s_i) r_{t, k}(s_i(k)) \notag \\
=&~ \frac{1}{L}\sum_{\ell\in [L]}\sum_{t \in [T]}\sum_{s_i \in S_i}q_{t, \ell}(s_i) r_{t, k}(\phi(s_i)(k))  - q_{t, \ell}(s_i) r_{t, k}(s_i(k)) \label{eq:regret-reduction1}
\end{align}
since $p_{t} = \frac{1}{L}\sum_{\ell\in [L]}q_{t, \ell}$.

For the first term, for each thread $\ell \in [L]$, we have
\begin{align}
\sum_{t \in [T]}\sum_{s_i \in S_i}q_{t, \ell}(s_i) r_{t, k}(\phi(s_i)(k)) = &~ \sum_{\beta \in [T/H^{\ell}]}\sum_{h\in [H]} \sum_{t =(\beta-1) H^{\ell}+ (h-1) H^{\ell-1} + 1}^{(\beta-1) H^{\ell}+ h H^{\ell-1}} \sum_{s_i \in S_i}q_{t, \ell}(s_i) r_{t, k}(\phi(s_i)(k)) \notag \\
\leq &~ \sum_{\beta \in [T/H^{\ell}]}\sum_{h\in[H]} \left\|\sum_{t =(\beta-1) H^{\ell}+ (h-1) H^{\ell-1} + 1}^{(\beta-1) H^{\ell}+ h H^{\ell-1}} r_{t, k}\right\|_{\infty}\label{eq:regret-reduction2}
\end{align}
where the second step holds since $q_{t, \ell} = w_{t, \ell,1}\times \cdots \times w_{t,\ell, K}$ is fixed during the interval $t \in [(\beta-1) H^{\ell}+ (h-1) H^{\ell-1} + 1: (\beta-1) H^{\ell}+ h H^{\ell-1}]$.

For the second term in Eq.~\eqref{eq:regret-reduction1}, for each thread $\ell \in [L]$ we have
\begin{align}
 \sum_{t \in [T]}\sum_{s_i \in S_i}q_{t, \ell}(s_i) r_{t, k}(s_i(k)) = &~ \sum_{\beta \in [T/H^{\ell}]} \sum_{t =(\beta-1)H^{\ell}+  1}^{\beta H^{\ell}} \sum_{s_i \in S_i}q_{t, \ell}(s_i) r_{t, k}(s_i(k))\notag \\
\geq &~ \sum_{\beta \in [T/H^{\ell}]} \left(\left\|\sum_{t =(\beta-1) H^{\ell}+  1}^{\beta H^{\ell}} r_{t, k}\right\|_{\infty} - 2\eps H^{\ell}\right)\notag\\
= &~  \sum_{\beta \in [T/H^{\ell}]} \left\|{\sum_{t =(\beta-1) H^{\ell}+  1}^{\beta H^{\ell}}} r_{t, k}\right\|_{\infty} - 2\eps T\label{eq:regret-reduction3}
\end{align}
where the second step follows from Lemma \ref{lem:external}.

Combining Eq.~\eqref{eq:regret-reduction1}\eqref{eq:regret-reduction2}\eqref{eq:regret-reduction3}, we have
\begin{align*}
\sum_{t\in [T]}\sum_{s_i \in S_i}p_{t}(s_i) r_{t, k}(\phi(s_i)(k)) - p_{t}(s_i) r_{t, k}(s_i(k)) \leq \frac{1}{L}\sum_{t\in [T]}\|r_{t, k}\|_{\infty}+ 2\eps T \leq 3\eps T
\end{align*}
Here the second step follows from $L = 1/\eps$ and $\|r_{t, k}\|_{\infty} \leq 1$.
\end{proof}

Now we can prove Theorem \ref{thm:bayesian-regret}
\begin{proof}[Proof of Theorem \ref{thm:bayesian-regret}]
We make each player $i \in [m]$ run Algorithm \ref{algo:bayesian} for $T = (\log(n)/\eps^2)^{1/\eps}$ rounds.
Let $p^{(i)}_t \in \Delta(S_i)$ be the strategy distribution of player $i$ at round $t \in [T]$, then the joint distribution is $\mu_t = \bigotimes_{i\in [m]} p_t^{(i)}$. We take $\mu_{t}^{(-i)} = \bigotimes_{i' \neq i} p_t^{(i')}$ for any $i \in [m]$.
We prove the empirical distribution $\mu = \frac{1}{T}\sum_{t\in [T]}\mu_t$ is a (every-type-$3\eps$)-NFCE. 

For each player $i \in [m]$, type $k \in [K]$, swap function $\phi: S_i \rightarrow S_i$, we have
\begin{align*}
&~ \E_{\theta\sim (\rho | \theta_i = k )}  \E_{s\sim \mu}[u_i(\theta; \phi(s_i)(\theta_i), s_{-i}(\theta_{-i}))- u_i(\theta; s(\theta))] \\
= &~ \frac{1}{T}\E_{\theta\sim (\rho | \theta_i = k )} \sum_{t\in [T]}\sum_{s_i \in S_i}p_t^{(i)}(s_i) \E_{s_{-i} \sim \mu_{t}^{(-i)}} [u_i(\theta; \phi(s_i)(k), s_{-i}(\theta_{-i})) - u_i(\theta; s_i(k), s_{-i}(\theta_{-i}))] \\
= &~ \frac{1}{T}\sum_{t\in [T]}\sum_{s_i \in S_i}p_{t}^{(i)} (s_i) \underbrace{\E_{s_{-i} \sim \mu_{t}^{(-i)}} \E_{\theta\sim (\rho | \theta_i = k )} [ u_i(\theta; \phi(s_i)(k), s_{-i}(\theta_{-i})) - u_i(\theta; s_i(k), s_{-i}(\theta_{-i}))]}_{r_{t, k}^{(i)}(\phi(s_i)(k)) - r_{t, k}^{(i)}(s_i(k))}\\
= &~ \frac{1}{T} \sum_{t\in [T]}\sum_{s_i \in S_i}p_{t}^{(i)} (s_i) (r_{t, k}^{(i)}(\phi(s_i)(k)) -  r_{t, k}^{(i)}(s_i(k))) \\
\leq &~ 3\eps.
\end{align*}
The first step follows from the construction of $\mu = \frac{1}{T}\sum_{t\in [T]}\mu_t = \frac{1}{T}\sum_{t\in [T]}\bigotimes_{i\in [n]}p_t^{(i)}$, the third step follows from the definition of reward $r_{t, k}^{(i)}$ for player $i$ (Line \ref{line:reward} of Algorithm \ref{algo:bayesian}), the last step follows from Lemma \ref{lem:external-swap}.
Hence, we have proved $\mu$ is a (every-type-$3\eps$)-NFCE.

For computation efficiency, a player needs to compute $O(nK)$ reward entries per iteration (Line \ref{line:reward} of Algorithm \ref{algo:bayesian}). Each of those entries can be approximated (with high probability, up to $\pm \eps$ error) by random sampling $O(\log(mTnKL)/\eps^2) = O(\log(mnK/\eps)/\eps^3)$ entries from the joint strategy distribution $\mu_t$ and the prior distribution over $\Theta$ -- we assume one can sample from the conditional distribution $(\rho | \theta_i = k)$ in $O(m)$ time. The total runtime per player equals $O(mnK\log(mnK/\eps)/\eps^3)$.

\end{proof}

\bibliographystyle{alpha}
\bibliography{ref.bib}

\newpage
\appendix
\section{On the rank of CE}
\label{sec:example-app}

We give a simple example of Bayesian game (thus also an example of extensive-form game), where behaviourizing a rank-$2$ CE is no longer a CE. 
There are two players, Alice and Bob, in the game.
Bob has $2$ actions,  one type and his utility is always $0$. 
Alice has $n$ types and $4$ actions. 
For each type, Alice has utility
\begin{align*}
u_A(0, 0) = 0, u_A(1, 0) = 0, u_{A}(2, 0) = 1, u_B(3, 0) =  -2
\end{align*}
and 
\begin{align*}
u_A(0, 1) = 0, u_A(1, 1) = 0, u_{A}(2, 1) = -2, u_A(3, 1) =  1.
\end{align*}

Consider the following rank-$2$ CE $\mu = \frac{1}{2} p_A^{(1)}\otimes  p_B^{(1)} + \frac{1}{2} p_A^{(2)}\otimes p_B^{(2)}$, where $p_A^{(1)} = \frac{1}{3}(0,0,\ldots,0) + \frac{2}{3}(1,1, \ldots, 1)$, $p_A^{(2)} = \frac{2}{3}(0,0,\ldots,0) + \frac{1}{3}(1,1, \ldots, 1)$; $p_B^{(1)} = 0$, $p_B^{(0)} = 1$. 
That is, in the first product distribution, Alice plays all $0$ w.p. $1/3$ and plays all $1$ w.p. $2/3$, Bob always plays $0$; in the second product distribution, Bob plays all $0$ w.p. $2/3$ and plays all $1$ w.p. $2/3$, Bob always plays $1$.
It is not hard to verify that $\mu$ is indeed a CE and Alice receives $0$ utility.

Meanwhile, the only way to behaviourize $\mu$, is to replace $p_A^{(1)}$ with $x_A^{(1)} = (\frac{1}{3}, \frac{2}{3}) \otimes \cdots \otimes (\frac{1}{3}, \frac{2}{3})$ and  $x_B^{(1)} = (\frac{2}{3}, \frac{1}{3}) \otimes \cdots \otimes (\frac{2}{3}, \frac{1}{3})$. Let $\mu'= \frac{1}{2} x_A^{(1)}\otimes  p_B^{(1)} + \frac{1}{2} x_A^{(2)}\otimes p_B^{(2)}$.
Consider the swap function $\phi_A = \{0,1,2,3\}^n \rightarrow \{0,1,2,3\}^n$
\begin{align*}
\phi_A(s) =
\left\{
\begin{matrix}
(2,\ldots, 2) & s\in \{0,1\}^n, |s| \in (n/3 - \sqrt{n}\log(n), n/3 + \sqrt{n}\log(n))\\
(3, \ldots, 3) & s\in \{0,1\}^n, |s| \in (2n/3 - \sqrt{n}\log(n), 2n/3 + \sqrt{n}\log(n))\\
s & \text{otherwise}
\end{matrix}
\right.
\end{align*}
We claim that applying the swap function $\phi_A$, the utility of Alice becomes $1 - o(1)$ (under $\mu'$). 
First, we know that with probability $1-o(1)$, for any $(s_A, s_B) \sim \mu'$, $|s_A| \in (n/3 - \sqrt{n}\log(n), n/3 + \sqrt{n}\log(n))$ or $|s_A| \in (2n/3 - \sqrt{n}\log(n), 2n/3 + \sqrt{n}\log(n))$. In the former case, Bob plays $1$ w.p. $1-o(1)$ so switching to action $(3,3,\ldots, 3)$ has utility $1$; in the latter case, Bob plays $0$ w.p. $1-o(1)$ so switching to action $(2,2,\ldots, 2)$ has utility $1$. This implies $\mu'$ is not a CE.
\section{Missing proof from Section \ref{sec:lower}}
\label{sec:lower-app}
\begin{fact}
\label{fact:product1}
Let $X, Y$ be finite domains. Let $p, q \in \Delta(X\times Y)$ be two probability distributions over $X\times Y$. Define $p(x):= \sum_{y\in Y}p(x, y)$ be the marginal probability of $x \in X$ and $p(y| x) = p(x,y)/p(x)$ be the conditional probability of $y \in Y$ on $x \in Y$.
Suppose $p(x) = q(x)$ holds for any $x \in X$, then 
\[
\TV(p; q) = \sum_{x\in X}p(x) \cdot \TV(p(\cdot | x); q(\cdot| x)).
\]
\end{fact}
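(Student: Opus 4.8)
The plan is to expand the total variation distance by its definition as a normalized $\ell_1$ distance and then factor each joint probability into marginal times conditional. Recall that for distributions $\mu,\nu$ on a finite set $Z$, $\TV(\mu;\nu) = \frac12\sum_{z\in Z}|\mu(z)-\nu(z)|$. Applying this with $Z = X\times Y$, I would write
\[
\TV(p;q) = \frac12 \sum_{x\in X}\sum_{y\in Y} |p(x,y) - q(x,y)|.
\]
The key substitution is $p(x,y) = p(x)\,p(y\mid x)$ and, using the hypothesis $q(x)=p(x)$, $q(x,y) = q(x)\,q(y\mid x) = p(x)\,q(y\mid x)$. Hence $|p(x,y)-q(x,y)| = p(x)\,|p(y\mid x) - q(y\mid x)|$ for every $x$ with $p(x)>0$.

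The next step is to interchange/group the sums: for each fixed $x$ with $p(x)>0$,
\[
\sum_{y\in Y} |p(x,y)-q(x,y)| = p(x)\sum_{y\in Y}|p(y\mid x) - q(y\mid x)| = p(x)\cdot 2\,\TV\bigl(p(\cdot\mid x);q(\cdot\mid x)\bigr).
\]
Summing over $x$ and multiplying by $\tfrac12$ then gives exactly $\sum_{x\in X}p(x)\,\TV(p(\cdot\mid x);q(\cdot\mid x))$, which is the claimed identity.

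The only subtlety — and the one point I would be careful about rather than a real obstacle — is the handling of $x$ with $p(x)=0$: there $p(y\mid x)$ (and $q(y\mid x)$, since $q(x)=p(x)=0$) is undefined, but both $p(x,y)$ and $q(x,y)$ vanish for all $y$, so such $x$ contribute $0$ to the left-hand side, and the convention that the summand $p(x)\,\TV(p(\cdot\mid x);q(\cdot\mid x))$ is $0$ whenever $p(x)=0$ keeps the right-hand side consistent. With that caveat noted, the computation is a routine two-line manipulation; there is no genuinely hard step.
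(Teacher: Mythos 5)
Your proof is correct and follows essentially the same route as the paper's: expand $\TV$ as the normalized $\ell_1$ distance, factor $p(x,y)=p(x)p(y\mid x)$ and $q(x,y)=p(x)q(y\mid x)$ using the equal-marginals hypothesis, and regroup the sum over $x$. Your explicit handling of the $p(x)=0$ case is a minor care point the paper leaves implicit, but it changes nothing substantive.
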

\begin{proof}
We have 
\begin{align*}
\TV(p; q) = &~ \frac{1}{2}\sum_{(x, y) \in X\times Y}|p(x, y) - q(x, y)| = \sum_{x\in X}\left(\frac{1}{2}\sum_{y\in Y}|p(x, y) - q(x, y)|\right)\\
= &~ \sum_{x\in X}p(x) \left(\frac{1}{2}\sum_{y\in Y}|p(y|x) - q(y| x)|\right) = \sum_{x\in X}p(x)\cdot \TV(p(\cdot | x); q(\cdot| x)).
\end{align*}
The first step follows from the definition of TV distance, the third step holds since $p(x, y) = p(x)\cdot p(y|x)$, $q(x, y) = q(x)q(y|x) = p(x)q(y|x)$, the last step follows the definition of TV distance.
\end{proof}

\begin{fact}
\label{fact:product2}
Let $X, Y$ be finite domains. Let $w \in \Delta(X \times Y)$ be a joint distribution over $X\times Y$, and $w_{X}, w_{Y}$ be the marginal distribution of $w$ on $X, Y$. 
Let $p = w_{X} \times w_{Y} \in \Delta(X)\times \Delta(Y)$ be a product distribution over $X\times Y$ and it has the same marginal distribution as $w$. 
Then for any product distribution $q \in \Delta(X)\times \Delta(Y)$, we have 
\[
\TV(p; q) \leq 2\TV(w; q)
\]
\end{fact}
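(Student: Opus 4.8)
The plan is to deduce Fact~\ref{fact:product2} from two elementary properties of total variation distance: sub-additivity over product distributions, and the fact that projecting onto a coordinate cannot increase $\TV$. Write $q = q_X \times q_Y$ with $q_X \in \Delta(X)$, $q_Y \in \Delta(Y)$, and recall $p = w_X \times w_Y$.

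\textbf{Step 1 (sub-additivity over products).} First I would record that for any $a_1,b_1\in\Delta(X)$ and $a_2,b_2\in\Delta(Y)$,
\[
\TV(a_1\times a_2;\, b_1\times b_2)\le \TV(a_1;b_1)+\TV(a_2;b_2).
\]
This follows from the triangle inequality applied to the hybrid distribution $b_1\times a_2$, together with the identity $\TV(c\times a;\, c\times b)=\TV(a;b)$ for any fixed $c$, which is immediate from the definition of $\TV$ since the common factor $c(\cdot)$ pulls out of the $\ell_1$-sum. Applying this with $a_1=w_X$, $a_2=w_Y$, $b_1=q_X$, $b_2=q_Y$ gives $\TV(p;q)\le \TV(w_X;q_X)+\TV(w_Y;q_Y)$.

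\textbf{Step 2 (marginalization does not increase $\TV$).} Next I would use that for the coordinate projection $\pi_X\colon X\times Y\to X$ and any two joint laws, the induced laws on $X$ are no further apart than the originals; concretely, $\TV(w_X;q_X)=\tfrac12\sum_{x}\bigl|\sum_y (w(x,y)-q(x,y))\bigr|\le \tfrac12\sum_{x,y}|w(x,y)-q(x,y)|=\TV(w;q)$, using that $w$ has $X$-marginal $w_X$ and that $q=q_X\times q_Y$ has $X$-marginal exactly $q_X$. Symmetrically $\TV(w_Y;q_Y)\le\TV(w;q)$. Substituting both bounds into the conclusion of Step~1 yields $\TV(p;q)\le 2\,\TV(w;q)$, which is the claim.

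There is essentially no obstacle here; the only point that needs attention is that $q$ must genuinely be a product distribution so that its two marginals are exactly its two factors, which is precisely what lets Step~2 feed back into Step~1. Everything else is the definition of $\TV$ and two applications of the triangle inequality.
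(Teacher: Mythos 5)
Your proposal is correct and follows essentially the same route as the paper's proof, which consists of exactly the two inequalities $\TV(p;q) \leq \TV(w_X;q_X) + \TV(w_Y;q_Y) \leq 2\TV(w;q)$; you simply spell out the justifications (hybrid/triangle argument for sub-additivity over products, and marginalization being non-expansive, using that $q$ is a product so its marginals are its factors) that the paper leaves implicit. No gaps.
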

\begin{proof}
We have
\begin{align*}
\TV(p;q) \leq  \TV(w_X; q_X) + \TV(w_{Y}; q_{Y}) \leq 2\TV(w; q).
\end{align*}
\end{proof}

\end{document}